\documentclass[version=final]{iacrcc}

\license{CC-by}

\usepackage{graphicx}
\usepackage{braket}
\usepackage{amsmath} %
\usepackage{amssymb} %
\usepackage{todonotes}
\usepackage{array}
\usepackage{algorithm}
\usepackage{algorithmic}
\usepackage{xspace}
\usepackage{framed}
\usepackage{ulem}
\usepackage{extarrows}
\usepackage{latexsym}
\usepackage{bm}
\usepackage[all]{xy}
\usepackage{amsfonts}
\usepackage{mathrsfs}
\usepackage{booktabs}
\usepackage{lipsum}                     
\usepackage{xargs}

\newcommand{\Ext}{\ensuremath{\mathsf{Ext}}}

\newcommand{\negl}{\ensuremath{\mathsf{negl}}\xspace}

\newcommand{\crs}{\ensuremath{\mathsf{crs}}}

\allowdisplaybreaks[1]

\newcommand{\NP}{\mathsf{NP}}

  \newcommand{\cA}{\ensuremath{{\mathcal A}}\xspace}

  \newcommand{\cR}{\ensuremath{{\mathcal R}}\xspace}

\newcommand{\zo}{\ensuremath{{\{0,1\}}}\xspace}

\newcommand{\cL}{\ensuremath{{\mathcal L}}\xspace}

\newcommand{\sP}{\mathsf{P}}
\newcommand{\sV}{\mathsf{V}}

\newcommand{\Sim}{\mathsf{Sim}}

\newcommand{\td}{\mathsf{td}}

\newcommand{\bbN}{\mathbb{N}}

\newcommand{\commentout}[1]{}

\newcommand{\Setup}{\ensuremath{\mathsf{Setup}}\xspace}

\newcommand{\token}{\texttt{tok}}

\usepackage[nointegrals]{wasysym}

\newcommand{\R}{\ensuremath{\mathcal{R}}}

\newcommand{\X}{\ensuremath{\mathcal{X}}}
\newcommand{\Y}{\ensuremath{\mathcal{Y}}}

\newcommand{\supp}{\textsc{Supp}}
\newcommand{\genf}{\textsc{Gen}_{\mathcal{F}}}
\newcommand{\invf}{\textsc{Inv}_{\mathcal{F}}}
\newcommand{\chkf}{\textsc{Chk}_{\mathcal{F}}}
\newcommand{\sampf}{\textsc{Samp}_{\mathcal{F}}}

\def\*#1{\mathbf{#1}}

\newtheorem{claim}{Claim}

\renewenvironment{claim}
  {\par\noindent\refstepcounter{claim}\textbf{Claim \theclaim.}\,\itshape}
  {\par}
\title
 [running  = {Zero-Knowledge Proofs of Quantumness},
      ]
{Zero-Knowledge Proofs of Quantumness}
\addauthor[
           orcid   = {0000-0003-1136-4064},
           inst    = {1},
           email   = {hieu.phan@telecom-paris.fr}
          ]{Duong Hieu Phan}

\addauthor[
           orcid   = {0000-0001-5272-2572},
           inst    = {1},
           email   = {weiqiang.wen@telecom-paris.fr}
          ]{Weiqiang Wen}

\addauthor[
           orcid   = {0000-0001-9147-6554},
           inst    = {2,3},
           email   = {yanxy2020@bupt.edu.cn}
          ]{Xingyu Yan}

\addauthor[
           inst    = {1},
           email   = {jinwei.zheng@telecom-paris.fr}
          ]{Jinwei Zheng}

\addaffiliation[country={France}]{LTCI, Telecom Paris, Institut Polytechnique de Paris, Paris}
\addaffiliation[country={China}]{State Key Laboratory of Networking and Switching Technology, Beijing University of Posts and Telecommunications, Beijing, 100876}
\addaffiliation[country={China}]{School of Cyberspace Science and Technology, Beijing Institute of Technology, Beijing, 100081}

\genericfootnote{This work was supported in part by the National Key Research and Development Program of China (No. 2022YFB2702700), by the European Union's Horizon Europe research and innovation programme under the project ``Quantum Secure Networks Partnership'' (QSNP, No. 101114043), by the France 2030 ANR Project ANR-22-PECY-003 SecureCompute, and by the ANR Project ANR-24-CE48-4293 HELO.}
\begin{document}

\maketitle

\keywords{Quantum cryptography, Zero-knowledge, Proofs of quantumness.}

\begin{abstract}
  With the rapid development of quantum computers, proofs of quantumness have recently become an interesting and intriguing research direction. However, in all current schemes for proofs of quantumness, quantum provers almost invariably face the risk of being maliciously exploited by classical verifiers. In fact, through malicious strategies in interaction with quantum provers, classical verifiers could solve some instances of hard problems that arise from the specific scheme in use. In other words, malicious verifiers can break some schemes (that quantum provers are not aware of) through interaction with quantum provers. All this is due to the lack of formalization that prevents malicious verifiers from extracting useful information in proofs of quantumness.

To address this issue, we formalize zero-knowledge proofs of quantumness. Intuitively, the zero-knowledge property necessitates that the information gained by the classical verifier from interactions with the quantum prover should not surpass what can be simulated using a simulated classical prover interacting with the same verifier. As a result, the new zero-knowledge notion can prevent any malicious verifier from exploiting quantum advantage.
Interestingly, we find that the classical zero-knowledge proof is sufficient to compile some existing proofs of quantumness schemes into zero-knowledge proofs of quantumness schemes.

Due to some technical reason, it appears to be more general to require zero-knowledge proof on the verifier side instead of the prover side. Intuitively, this helps to regulate the verifier's behavior from malicious to be honest-but-curious. As a result, both parties will play not only one role in the proofs of quantumness but also the dual role in the classical zero-knowledge proof. 

Specifically, the two principle proofs of quantumness schemes: Shor's factoring-based scheme and learning with errors-based scheme in [Brakerski et al, FOCS, 2018], can be transformed into zero-knowledge proofs of quantumness by requiring an extractable non-interactive zero-knowledge argument on the verifier side. 
Notably, the zero-knowledge proofs of quantumness can be viewed as an enhanced security notion for proofs of quantumness. To prevent malicious verifiers from exploiting the quantum device's capabilities or knowledge, it is advisable to transition existing proofs of quantumness schemes to this framework whenever feasible.
\end{abstract}

\begin{textabstract}
  With the rapid development of quantum computers, proofs of quantumness have recently become an interesting and intriguing research direction. However, in all current schemes for proofs of quantumness, quantum provers almost invariably face the risk of being maliciously exploited by classical verifiers. In fact, through malicious strategies in interaction with quantum provers, classical verifiers could solve some instances of hard problems that arise from the specific scheme in use. In other words, malicious verifiers can break some schemes (that quantum provers are not aware of) through interaction with quantum provers. All this is due to the lack of formalization that prevents malicious verifiers from extracting useful information in proofs of quantumness.

To address this issue, we formalize zero-knowledge proofs of quantumness. Intuitively, the zero-knowledge property necessitates that the information gained by the classical verifier from interactions with the quantum prover should not surpass what can be simulated using a simulated classical prover interacting with the same verifier. As a result, the new zero-knowledge notion can prevent any malicious verifier from exploiting quantum advantage.
Interestingly, we find that the classical zero-knowledge proof is sufficient to compile some existing proofs of quantumness schemes into zero-knowledge proofs of quantumness schemes.

Due to some technical reason, it appears to be more general to require zero-knowledge proof on the verifier side instead of the prover side. Intuitively, this helps to regulate the verifier's behavior from malicious to be honest-but-curious. As a result, both parties will play not only one role in the proofs of quantumness but also the dual role in the classical zero-knowledge proof. 

Specifically, the two principle proofs of quantumness schemes: Shor's factoring-based scheme and learning with errors-based scheme in [Brakerski et al, FOCS, 2018], can be transformed into zero-knowledge proofs of quantumness by requiring an extractable non-interactive zero-knowledge argument on the verifier side. 
Notably, the zero-knowledge proofs of quantumness can be viewed as an enhanced security notion for proofs of quantumness. To prevent malicious verifiers from exploiting the quantum device's capabilities or knowledge, it is advisable to transition existing proofs of quantumness schemes to this framework whenever feasible.
\end{textabstract}

\section{Introduction}
A cryptographic proofs of quantumness (PoQ) is a scheme that validates a quantum device's ability to perform tasks that classical devices cannot handle efficiently. These schemes enable a quantum prover to convince a classical verifier of its quantum computational power through specific challenging tasks, such as factoring~\cite{shor1994algorithms,regev2023efficient}, BosonSampling~\cite{aaronson2011computational,bremner2011classical}, learning with errors (LWE)-based trapdoor claw-free (TCF) hash functions~\cite{brakerski2021cryptographic,brakerski2020simpler}, computational Bell test~\cite{KCVY,KLVY,brakerski2023simple}, etc. 

Recent PoQ schemes primarily address scenarios where the prover may act maliciously, such as checking whether a classical prover is masquerading as a quantum device and verifying whether the results produced by the quantum device are indeed correct. In addition to this, this work considers an important supplementary perspective, focusing on situations where \textbf{the classical verifier might act maliciously}. To see how a malicious verifier could behave, we provide a scenario about how the quantum prover could be maliciously exploited in the PoQ as shown in Figure~\ref{fig.0}.

In the factoring-based PoQ scheme, a malicious verifier~$\mathcal{V}^*$ can simply replace the large integer number~\( N \) with their carefully chosen number~\( N^* \) (e.g., that can be taken from an RSA public key). With such substitution, 
the verifier could easily exploit the prover's quantum capabilities to factor~\(N^*\) for gaining benefit.
Then in the LWE-based PoQ scheme in~\cite{brakerski2021cryptographic}, how the verifier can make use of the quantum prover is much less clear. On one side, assuming the post-quantum hardness of LWE problem, the classical verifier should not be able to get much more advantage from the quantum prover for solving the LWE instance. Whereas the transcripts from the quantum prover should contain more information than that which any classical prover could possess. This actually helps the verifier to distinguish quantum prover from any macilious classical prover. To conclude, we can not properly prevent the verifier from obtaining additional information through the quantum prover beyond being merely convinced of the prover's quantumness.

Regarding the explicit attack and potential unexpected advantage that the verifier could gain from the quantum prover, preventing malicious verifiers from extracting useful information from PoQ is necessary and raises the following question:
\begin{center}
   \textit{Is there a ``zero-knowledge'' proofs of quantumness that allows a prover to demonstrate quantum capability without revealing any other useful information?}
\end{center}

The ``zero-knowledge'' property is to ensure that, at any point, the quantum prover's capabilities are not maliciously exploited by the verifier. 
In other words, the interactive proof process does not reveal any information beyond the fact that ``the prover possesses quantum capabilities''. 
We believe that considering the zero-knowledge PoQ (ZKPoQ) scheme is crucial and has far-reaching implications. In the post-quantum era, the test of quantumness may serve as the first step for classical users to subscribe to quantum services. The ZKPoQ can fundamentally ensure that the quantum server's computational power is not swindled by classical users during this verification process, thereby safeguarding the server’s interests.

\begin{figure}[t]
\centering
\includegraphics[width=0.98\textwidth]{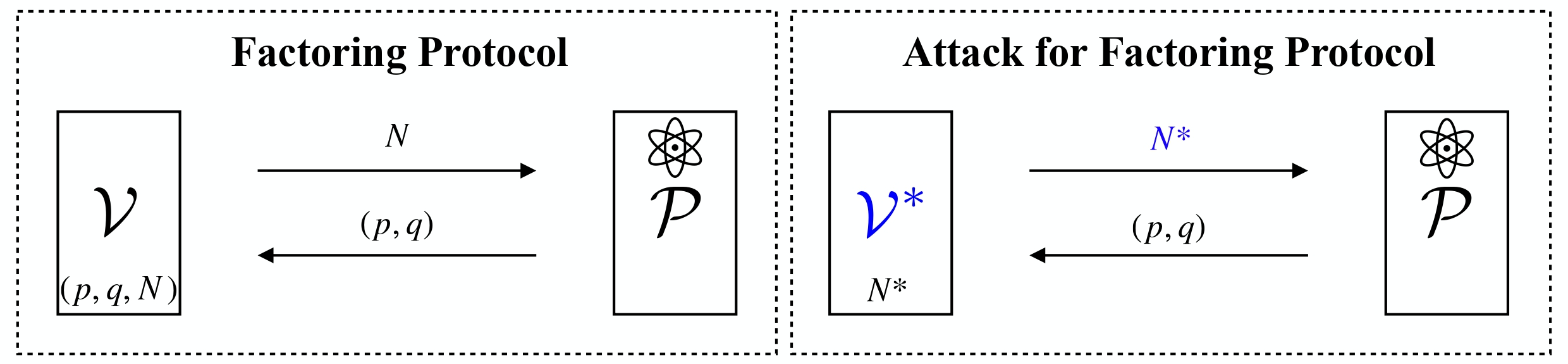}
\caption{
Schematic diagram of a malicious verifier~$\mathcal{V}^*$ spoofing to utilize the quantum capabilities of the quantum prover~$\mathcal{P}$ in factoring-based quantumness proof scheme.
} \label{fig.0}
\end{figure}


\subsection{Our Contributions}\label{sec:contribution}

The traditional security model for the proofs of quantumness only concerns the completeness and soundness, where the latter one requires that no classical prover can successfully run the protocol and pass the verification.
In this work, we introduce the zero-knowledge property for the proofs of quantumness protocol for the first time. In particular, the zero-knowledge property is also defined in a classical manner but with respect to quantum capability instead of particular knowledge. Informally, a proofs of quantumness protocol is said to be zero-knowledge when the communication between the quantum prover and the classical verifier can be perfectly simulated with a polynomial-time probabilistic classical prover communicated with the same verifier (as shown in Figure~\ref{fig:ZK-idea}). Under this zero-knowledge definition, the verifier can not exploit the quantum advantage of the quantum prover via their communication anymore, as their communication does not leak more information than the communication between the same verifier and a (simulated) classical prover.

\begin{figure}[t]
\centering
\includegraphics[width=0.98\textwidth]{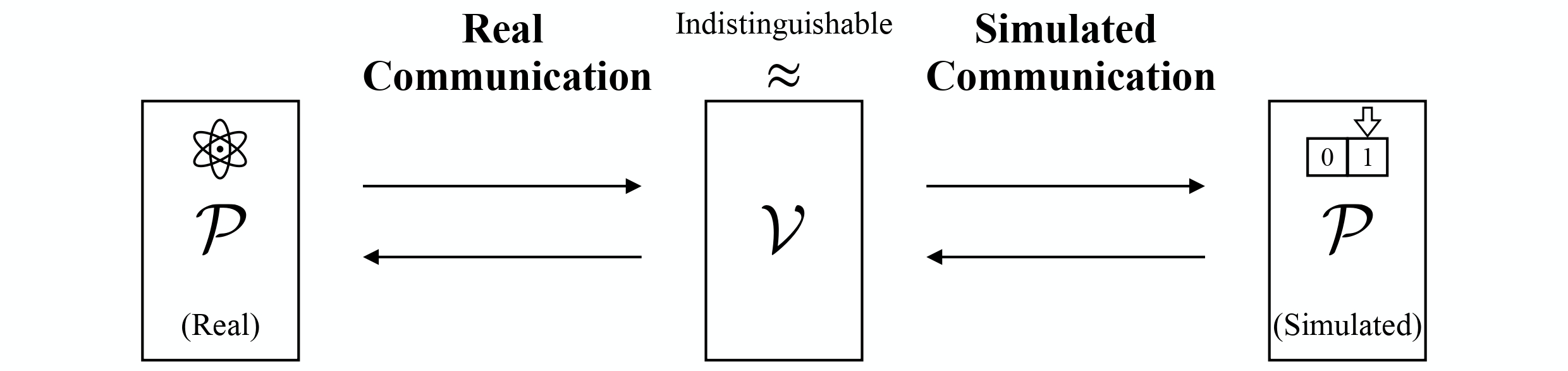}
\caption{
Illustration of our new zero-knowledge property for proofs of quantumness.
} \label{fig:ZK-idea}
\end{figure}

To demonstrate the rationality of this new notion, we migrate two mainstream PoQ schemes, factoring-based scheme and LWE-based scheme~\cite{brakerski2021cryptographic}, to the ones satisfying zero-knowledge. 
Informally, we define the PoQ with zero knowledge property as follows.
\begin{definition}[Zero-knowledge proofs of quantumness, ZKPoQ]
 A ZKPoQ scheme satisfies quantum completeness, classical soundness, and (computational) zero-knowledge properties.
\end{definition}

Our first technical contribution (the formal statement is in Section~\ref{Sec:ZKPOQ demonst.}) concerns the zero-knowledge property. We show that certain existing PoQ schemes can be efficiently transformed into the ones with zero-knowledge property.

\paragraph{\textbf{Infeasibility of requiring zero-knowledge proofs from prover}.} Intuitively, one might think that a quantum prover could directly make use of a zero-knowledge proof to hide the information from the prover while still making the verification feasible. Indeed, this approach works well with the factoring-based scheme, where the prover could directly respond with zero-knowledge proof of factorization~$(p,q)$ for~$N$. However, it is not suitable for other schemes such as the LWE-based PoQ scheme~\cite{brakerski2021cryptographic}. In the latter one, 
the verifier will challenge the quantum prover to provide the witness to some statement that is only known by the verifier itself. In this case, it is infeasible to require the prover to provide a zero-knowledge proof without knowing the explicit statement to prove.
In more details, in the equation test of the PoQ scheme in~\cite{brakerski2021cryptographic} (refer to Figure~\ref{fig:BCMVV} for the description of the scheme), the prover is asked to provide a witness~\( \sigma_1=(c, \mathbf{d}) \in \{0,1\}\times\{0,1\}^n \) to the statement~\( c = \mathbf{d}^\top \cdot (\textbf{x}_0 \oplus \textbf{x}_1) \bmod 2 \), where~$\textbf{x}_0$ and~$\textbf{x}_1$ are two preimages for a same output~$y$ under a post-quantum secure claw-free function. Due to the claw-free property, even the quantum prover should not be able to obtain both preimages~\( \textbf{x}_0 \) and~\( \textbf{x}_1 \). As a result, the adversary does not have access to the statement.
Therefore, instead of requiring the prover to provide a zero-knowledge proof, our approach considers ensuring ZKPoQ by requiring the verifier to provide a zero-knowledge proof to claim that it is not behaving maliciously.

This leads to our first technical contribution, which supplements zero-knowledge on top of two existing PoQ schemes by requiring an extractable proof from the verifier's side. 
To sketch the idea, any classical prover communicated with the verfier, can now extract necessary secret knowledge from the additional extractable proof from the verifier, which is sufficient to be used to simulate the communication without any quantum resource.
This properly ensures the zero-knowledge on the prover side for the PoQ scheme. However, the newly added extractable proof from the verifier causes an issue on the soundness of the PoQ protocol, as any classical prover can simulate the communication using the knowledge under the extractable proof without exploiting quantum advantage. To fix this issue, we further require the extractable proof to be zero-knowledge. The latter one can be constructed from one-way function and public key encryption~\cite{JK23}. 
As an interesting result, both zero-knowledge and soundness properties of the aforementioned ZKPoQ primitive can be satisfied simultaneously. In particular, we provide transformation from two PoQ schemes to ZKPoQ schemes (refer to Theorems~\ref{thm:informal_factoring} and~\ref{thm:informal_LWE} for the informal statements of our results).

\begin{figure}[t]
\centering
\includegraphics[width=0.98\textwidth]{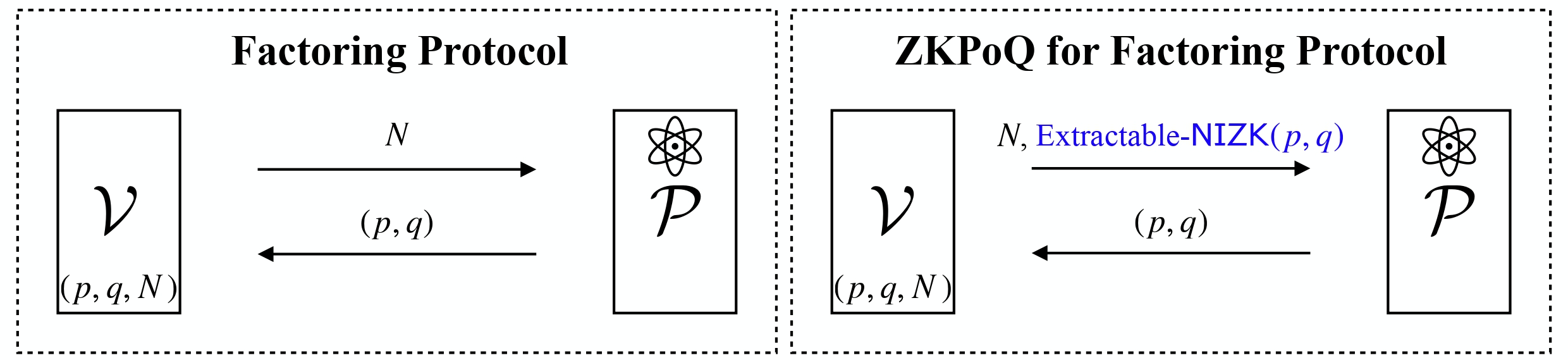}
\caption{Diagram of ZKPoQ for factoring-based scheme. } 
\label{fig.example1}
\end{figure}

As shown in Figure~\ref{fig.example1}, the main modification on top of the original scheme is that we further require the verifier to provide an extractable-NIZK proof of the factorization~$(p, q)$ of the integer~$N$. 
One might notice that both the prover and the verifier in the proofs of quantumness scheme will also play the dual role in the additional NIZK proof. In particular, to argue the soundness of this resulting ZKPoQ scheme, we can rely on the zero-knowledge property of the extractable-NIZK proof such that any malicious classical prover can not get any information from this additional proof and should factor the integer~$N$ by itself. 
Notably, a classically secure NIZK will be sufficient as the ZK property is only required to hold facing a malicious classical prover for the soundness of proofs of quantumness.
Regarding the zero-knowledge property of this ZKPoQ scheme, the simulator can then make use of the extractability property of the extractable-NIZK proof to extract the witness~$(p, q)$ from the proof. Therefore, the transcript from the quantum prover can be simulated perfectly.

\begin{theorem}[ZKPoQ for factoring-based scheme]\label{thm:informal_factoring}
Given an extractable-NIZK proof, the factoring-based PoQ scheme shown in Figure~\ref{fig.example1} can be transformed into ZKPoQ one.
\end{theorem}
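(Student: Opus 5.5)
We prove the three properties of the scheme in Figure~\ref{fig.example1} one at a time: quantum completeness, classical soundness, and zero-knowledge. In the scheme, the verifier samples an RSA-type modulus $N=pq$ and sends $N$ together with an extractable-NIZK proof $\pi$ (under a $\crs$ from \Setup) of knowledge of a nontrivial factorization of $N$. The prover checks $\pi$, aborts if it is invalid, and otherwise returns the factors. The verifier accepts iff the returned pair multiplies to $N$ with both factors nontrivial.

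\textbf{Completeness.} The honest verifier produces an accepting proof, by completeness of the NIZK. The honest quantum prover then runs Shor's algorithm and outputs $(p,q)$ with overwhelming probability, possibly after polynomially many repetitions. So it is accepted.

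\textbf{Soundness against classical provers.} This is the step that needs care, since the NIZK is the only new information a classical prover sees. Suppose a PPT classical prover $\mathcal{P}^*$ makes the honest verifier accept with non-negligible probability. We define hybrids.
\begin{itemize}
\item In the first hybrid, replace $(\crs,\pi)$ by the output of the NIZK simulator $\Sim$, which takes only the statement $N$ and no witness.
\item By the zero-knowledge property of the NIZK, classical ZK suffices here because $\mathcal{P}^*$ is classical. Hence $\mathcal{P}^*$'s success probability changes only negligibly.
\end{itemize}
Now consider a factoring algorithm $\mathcal{B}$. On input a challenge $N$ distributed as the verifier's modulus, $\mathcal{B}$ runs $\Sim(N)$ to obtain $(\crs,\pi)$, feeds $(N,\pi)$ to $\mathcal{P}^*$, and outputs its answer. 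Then $\mathcal{B}$ factors $N$ with non-negligible probability, contradicting the classical hardness of factoring. The main obstacle is making sure the ZK notion covers the correct CRS distribution: a simulated CRS may be needed for both $\crs$ and $\pi$ together. It also requires that simulated proofs still pass the prover's verification check. Both follow from the standard (non-adaptive) NIZK zero-knowledge definition, since the statement $N$ is fixed before $\crs$ is simulated.

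\textbf{Zero-knowledge for the quantum prover.} We build a classical PPT simulator $\mathcal{S}$ that works for any malicious verifier $\mathcal{V}^*$.
\begin{itemize}
\item $\mathcal{S}$ generates the CRS in extraction mode, $(\crs,\td)\leftarrow\Ext_1$, and runs $\mathcal{V}^*(\crs)$ to obtain $(N^*,\pi^*)$.
\item If $\pi^*$ fails to verify, $\mathcal{S}$ aborts exactly as the honest prover would.
\item Otherwise it computes $(p,q)\leftarrow\Ext_2(\td,N^*,\pi^*)$ and sends $(p,q)$ to $\mathcal{V}^*$.
\end{itemize}
The extraction-mode CRS is computationally indistinguishable from an honest CRS, so the view of $\mathcal{V}^*$ up to its message is indistinguishable. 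By extractability, whenever $\pi^*$ verifies, $\Ext_2$ returns a valid factorization except with negligible probability. A valid factorization is exactly what the honest quantum prover would output; up to ordering, it is unique for an RSA modulus. The prover can canonicalize its output, for instance by sorting the factors, so that the two transcripts coincide. The simulated transcript is therefore computationally indistinguishable from the real one.

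Extractability must hold against the verifier, who is classical. This is why an extractable NIZK in the classical setting suffices, and it completes the proof of Theorem~\ref{thm:informal_factoring}.
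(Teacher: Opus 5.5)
Your decomposition matches the paper's: a hybrid through the NIZK simulator for soundness, and an extractor-based simulator for zero-knowledge. Your abort on an invalid $\pi^*$ and the canonical ordering of the factors are refinements the paper leaves implicit.

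The gap is in the zero-knowledge step, together with your closing claim that a classically secure extractable NIZK suffices. Indistinguishability of the honest and extraction-mode CRS gives indistinguishability of $\mathcal{V}^*$'s view only up to its message $(N^*,\pi^*)$. The full view also contains the prover's reply, which is the factorization of the adversarially chosen $N^*$, and computational indistinguishability is preserved only under efficient post-processing. To pass from (honest CRS, true factors) to (simulated CRS, true factors), the reduction must compute the factors of $N^*$ from a challenge CRS for which it has no trapdoor, i.e.\ it must run Shor's algorithm. So this step needs CRS indistinguishability against quantum distinguishers. Definition~\ref{def:simext-nizk-crs} does provide this, since its adversaries are quantum. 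Your argument is therefore sound under the paper's definition once the reduction is made explicit. Classical security is enough only for the soundness hybrid and for extractability, not for this CRS switch.

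With purely classical security the step can actually fail. Take a classically secure extractable NIZK and append to its CRS a modulus $N_0=p_0q_0$ and an element $g$. In the honest CRS, $g$ is a quadratic residue modulo $N_0$; in the simulated CRS it is a Jacobi-symbol-$1$ non-residue. Store $(p_0,q_0)$ in $\td$ so that $\Ext$ can answer for $N_0$, and let $\sV$ accept a fixed dummy proof for the statement $N_0$. Under the QR assumption, zero-knowledge and extractability still hold classically. Yet the verifier that sends $N_0$ with the dummy proof receives $(p_0,q_0)$ in both worlds, and the Legendre symbol of $g$ modulo $p_0$ separates your simulated view from the real one.

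A smaller point: the uniqueness of the factorization that you invoke requires the NP relation to force $p$ and $q$ to be prime. Under the ``nontrivial factorization'' relation you state at the outset, a malicious $N^*$ with three prime factors admits several witnesses. Nothing then ties the extracted witness to the honest prover's output distribution.
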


Now we can see how this ZKPoQ scheme can be used to avoid the attack scenarios depicted in Figure~\ref{fig.0}.
As illustrated in Figure~\ref{fig.example1}, according to the extractability of the extractable-NIZK proof, it would be infeasible for a malicious verifier~\( \mathcal{V}^* \) to provide a valid proof without knowing the factorization~$(p, q)$, which efficiently regulate the malicious behavior of the verifier in this scenario.

Next, we move the transformation from the LWE-based scheme in Figure~\ref{fig.example2} to ZKPoQ scheme. Again, the principle adjustment is to ask the verifier to submit an extractable-NIZK proof of the secret of the LWE instance. As before, the soundness of the ZKPoQ scheme is based on the zero-knowledge property of the extractable-NIZK proof. Notably, the zero-knowledge property of our LWE-based ZKPoQ scheme can also be achieved by the extratability of the extractable-NIZK proof. The main observation is that any classical prover knowing the secret of the LWE instance (sent from the verifier) can also pass the verification. Therefore, the simulator can extract the LWE secret and then perfectly simulate the transcripts from the prover to the verifier. This does not invalidate the soundness of the scheme, because the verifier is the only one knowing the secret, and therefore anyone else should follow the scheme to prove quantumness.

\begin{figure}[htbp]
\centering
\includegraphics[width=0.98\textwidth]{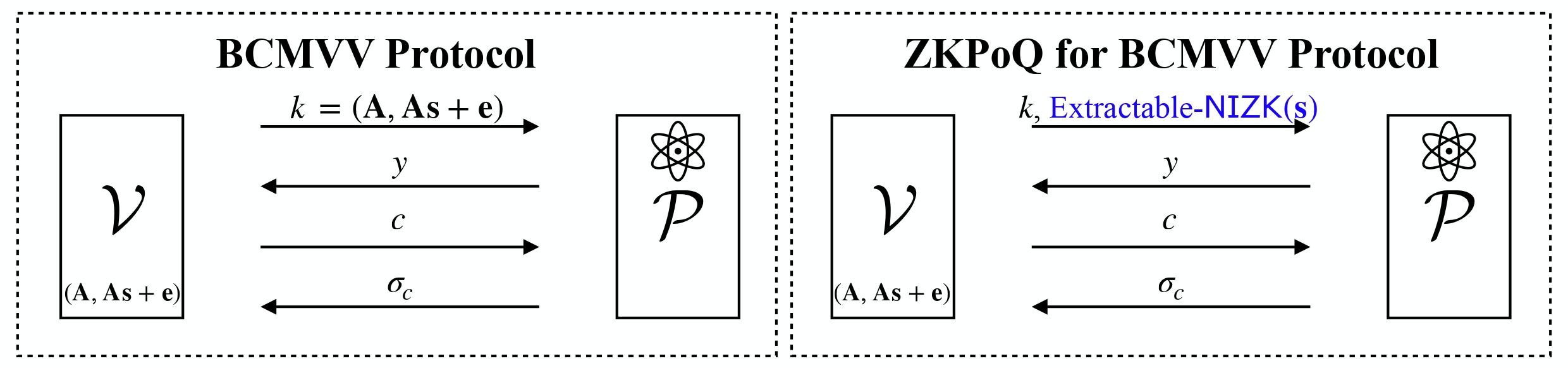}
\caption{Diagram of ZKPoQ for the LWE-based scheme~\cite{brakerski2021cryptographic}.} 
\label{fig.example2}
\end{figure}

\begin{theorem}[ZKPoQ for the LWE-based scheme~\cite{brakerski2021cryptographic}]~\label{thm:informal_LWE}
Given an extractable-NIZK proof, the LWE-based scheme in~\cite{brakerski2021cryptographic} can be transformed into ZKPoQ one.
\end{theorem}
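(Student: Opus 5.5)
We would prove Theorem~\ref{thm:informal_LWE} by checking the three ZKPoQ properties for the modified protocol of Figure~\ref{fig.example2}. In that protocol the verifier samples a TCF key $k$ together with its trapdoor $t_k$ (equivalently, the LWE secret $\mathbf{s}$ and the trapdoor of the matrix $\mathbf{A}$). It sends $k$ together with an extractable-NIZK proof $\pi$ for the relation $\{(k,t_k)\}$ of well-formed keys, under a common reference string $\crs$. The rest of the protocol is the BCMVV interaction: commitment $y$, challenge bit, and then either a preimage test or an equation test. The plan is to handle completeness, soundness and zero-knowledge in that order, with soundness treated as a reduction.

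\textbf{Completeness.} The quantum prover first runs the NIZK verifier on $(k,\pi)$. Because the NIZK has perfect (or overwhelming) completeness, an honest verifier's proof is accepted. After that the prover runs the BCMVV prover unchanged. Quantum completeness then follows directly from the completeness of~\cite{brakerski2021cryptographic}, up to the negligible NIZK completeness error.

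\textbf{Classical soundness.} We use a hybrid argument. In $H_0$ a cheating PPT classical prover $\mathcal{P}^*$ interacts with the honest verifier. In $H_1$ the pair $(\crs,\pi)$ is replaced by the NIZK simulator's output $\Sim(k,\td)$; this output depends only on $k$ and not on $t_k$. The success probability of $\mathcal{P}^*$ changes by at most a negligible amount between $H_0$ and $H_1$. To see this, note that the verifier's acceptance check can be evaluated efficiently by a reduction that itself holds $t_k$. So any gap between the hybrids yields a classical distinguisher against the zero-knowledge property of the NIZK. Classical zero-knowledge suffices here, since $\mathcal{P}^*$ is classical. In $H_1$, a prover that passes both tests with probability noticeably above the BCMVV threshold is exactly a classical cheating prover for the original scheme. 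The proof of~\cite{brakerski2021cryptographic} turns such a prover into an adversary against the adaptive hardcore bit property, and hence against LWE. \emph{The main obstacle} lies here. The BCMVV reduction rewinds $\mathcal{P}^*$ after it sends $y$, so our hybrid must be set up so that the simulated $\pi$ is fixed before $y$ and is reused across the rewinds. We also have to make sure the reduction never needs $t_k$ in order to produce $\pi$. Since the reduction in $H_1$ only uses $k$ and $\td$, this works, but the order of these steps must be checked carefully.

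\textbf{Zero-knowledge.} Given any malicious classical verifier $\mathcal{V}^*$, the simulator proceeds as follows. It generates $\crs$ in extraction mode together with an extraction trapdoor, and runs $\mathcal{V}^*$ to obtain $(k,\pi)$. If $\pi$ fails to verify, it aborts, just as the honest prover would. Otherwise it extracts $t_k$ with $\Ext$; by extractability this fails only with negligible probability. Using $t_k$ it plays the prover's role classically. It samples $\mathbf{x}$ and a bit $b$, and sets $y=f_{k,b}(\mathbf{x})$. On the preimage challenge it returns $(b,\mathbf{x})$. On the equation challenge it inverts $y$ to obtain both $\mathbf{x}_0$ and $\mathbf{x}_1$, samples $\mathbf{d}$ from the valid set, and outputs $c=\mathbf{d}^\top(\mathbf{x}_0\oplus\mathbf{x}_1)$. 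The remaining step is to argue that this distribution is statistically close to the honest quantum prover's measurement statistics. The honest prover measures $y$ from a uniform superposition over $\mathbf{x}$, which gives the same distribution on $y$ as classical sampling. Conditioned on $y$, its Hadamard measurement yields a uniform $\mathbf{d}$ in the valid set, with $c$ determined by $\mathbf{x}_0\oplus\mathbf{x}_1$. Both distributions can be computed from the TCF properties given $t_k$. The remaining gap comes from the negligible TCF injectivity and noise-flooding error and from the failure probability of extraction, which together give computational zero-knowledge. The factoring case, Theorem~\ref{thm:informal_factoring}, follows the same template: the extracted $(p,q)$ simply replaces $t_k$.
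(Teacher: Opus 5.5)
Your proposal is correct and follows the paper's three-part route. Completeness comes from NIZK completeness plus BCMVV completeness. Soundness switches $(\crs,\pi)$ to $\Sim_0$ and $\Sim_1(\crs,\td,k)$, then treats $\mathcal{P}^*$ wrapped with the simulated proof as a classical cheating prover for $\Sigma_{\mathsf{BCMVV}}$. Zero-knowledge runs $\mathcal{V}^*$ on an extraction-mode CRS, extracts its secret, and simulates $\mathbf{y}$ and $\sigma_c$ classically.

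The one genuine difference is your witness. You extract $t_k$, a certificate that $k$ is well formed. The paper extracts only $\mathbf{s}$ and sets $u=\mathbf{d}^\top(\mathbf{x}\oplus(\mathbf{x}-\mathbf{s}))$, which pairs $\mathbf{x}$ with its correct claw partner only when $b=0$. For $b=1$ the sampled $\mathbf{x}$ is $\mathbf{x}_1$ and its partner is $\mathbf{x}+\mathbf{s}$. Inverting $\mathbf{y}$ with $t_k$ avoids this and lets you check membership in $G_{k,0,\mathbf{x}_0}\cap G_{k,1,\mathbf{x}_1}$. It also makes explicit what a maliciously chosen key must satisfy for the claw structure, and hence the simulation, to hold. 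Soundness is unaffected, since the simulated proof never uses the witness.

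Two further points are more careful than the paper: you check $\pi$ before extracting, mirroring the honest prover's abort, and you account for the noise-flooding error instead of claiming identical distributions. Your rewinding concern is not a real obstacle. The wrapper that samples $(\crs,\td)\gets\Sim_0(1^\lambda)$, computes $\pi\gets\Sim_1(\crs,\td,k)$ and runs $\mathcal{P}^*$ is itself a PPT prover for $\Sigma_{\mathsf{BCMVV}}$. So BCMVV soundness applies as a black box, which is exactly what the paper does.

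One step is missing from your zero-knowledge argument. The real view contains $\crs\gets\Setup(1^\lambda)$, but your simulator outputs the CRS from $\Sim_0$. The error terms you list (noise flooding, extraction failure) do not account for this change of distribution. You need a hybrid that switches the CRS while the honest quantum prover is still answering, justified by the zero-knowledge property of $\Pi$ applied to an adversary that makes no oracle queries; the paper includes this step. Because that reduction must run the quantum prover, the CRS indistinguishability has to hold against quantum distinguishers in general. Your remark that classical zero-knowledge suffices is therefore valid only for the soundness part. Definition~\ref{def:simext-nizk-crs} provides the post-quantum guarantee, so the gap is easily closed.
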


Last but not least, we emphasize that using classically secure extractable-NIZK proof in the aforementioned ZKPoQ schemes is sufficient. Moreover, if we want to apply our LWE-based ZKPoQ scheme for other purpose such as certifiable randomness from quantum device~\cite{brakerski2021cryptographic} or key leasing~\cite{chardouvelis2023quantum}, post-quantum secure extractable-NIZK seems to be necessary. Because in that case, we want to ensure that even the quantum prover could not gain any extra advantage from the extractable-NIZK proof. As shown in Theorem~\ref{cor:simext-nizk-crs}, there exists a post-quantum secure extractable-NIZK proof based on the LWE assumption.

\paragraph{\textbf{Difficulty of having a generic transformation for proofs of quantumness to be zero-knowledge}.} 
In this work, we have examinated two examples based on factoring and LWE, respectively. 
These two protocols shares the same challenge-response type of procedure between the prover and verifier. Under this category, there exist more LWE-based protocols, but additional assumptions are involved. For example, the PoQ scheme in~\cite{brakerski2020simpler} requires random oracle and the ones in~\cite{KCVY,brakerski2023simple} further assume Bell's inequality. Therefore, it is not clear whether the idea of requiring an extractable-NIZK from the verifier's side is also applicable to the other LWE-based protocols. Other than the challenge-response style protocols, there is also non-interactive PoQ protocol such as the Bosonsampling~\cite{aaronson2011computational,bremner2011classical}. Therefore, though our idea seems to be quite general, different methods may still be required for other protocols to achieve zero-knowledge.

\subsection{Related works}\label{sec:relatedworks}

Proofs of quantumness (PoQ) represent a milestone for the field of quantum computation, serving as a classically-verifiable demonstration of non-classical behavior from a single quantum device. To date, there are three mainstream approaches for demonstrating proofs of quantumness:
\begin{enumerate}
    \item \textbf{Factoring-Based PoQ}: This approach leverages the well-known computational hardness of factoring large integers, a task presumed to be intractable for classical computers but efficiently solvable by quantum computers using algorithms such as Shor’s algorithm~\cite{shor1994algorithms}. The quantum demonstration involves producing factors of large composite numbers, which can be verified classically. However, implementing factoring-based PoQ requires fault-tolerant scalable quantum computer, which is unimplementable on short-term NISQ devices.

    \item \textbf{Sampling-Based PoQ}: This approach is based on the classical hardness of sampling problems, such as boson sampling~\cite{aaronson2011computational,madsen2022quantum} or random circuit sampling~\cite{Arute2019quantum, wu2021strong}. Although these methods can be implemented with NISQ hardware, they are not efficiently verifiable since the classical verification of these methods typically demands exponential time.

    \item \textbf{LWE-Based PoQ}: Another PoQ approach introduced by Brkerski et al.~\cite{brakerski2021cryptographic}, utilizes the quantum hardness of the Learning With Errors (LWE) problem to execute a cryptographic interactive protocol between a  polynomial-time  classical verifier and an ostensibly quantum polynomial-time prover. Wherein, the verifier presents challenges to the prover and checks the correctness of the prover’s responses. A crucial aspect of this method is that an effective quantum strategy should enable the prover to accurately respond to the verifier’s challenges with a high likelihood of success, while any effective classical strategy would only achieve a low probability of success, assuming the hardness of LWE assumption. In terms of implementation, the LWE-based PoQ appears to require significantly fewer resources compared to the Factoring-based PoQ, yet it still demands more than the Sampling-based PoQ. Regarding verification, both LWE-based and Factoring-based PoQ can be efficiently verified classically, unlike Sampling-based PoQ, which does not allow for such efficient classical verification.

    Following on the breakthrough work of~\cite{brakerski2021cryptographic}, numerous studies such as~\cite{brakerski2020simpler, alagic2020non, KCVY, KLVY, alnawakhtha2024lattice,brakerski2023simple} have developed progressively more efficient proofs of quantumness. The goal of these efforts is to simplify these tests so that they can be implemented on current NISQ quantum devices.

\end{enumerate}
 The above PoQ protocols, along with follow-up works, have not addressed scenarios that involve malicious verifiers, an area yet to be explored in the current research. This paper introduces zero-knowledge proofs of quantumness, presenting them as an advanced security concept for proofs of quantumness. Given that sampling-based PoQ schemes do not align with the cryptographic interactive protocol framework and lack the capability for efficient classical verification, our work specifically focuses the formalization of zero-knowledge proofs on classically verifiable PoQ protocols, with particular emphasis on those based on factoring and LWE.

\subsection{Open Problems}
Although our results naturally transform PoQ into ZKPoQ using extractable-NIZK, there remain many intriguing open problems regarding the general transformation of PoQ into ZKPoQ. Here, we mention some of them.
\begin{itemize}
    \item Is it possible to provide a more general transformation framework that formalizes the required characteristics of the PoQ? In our current work, we have demonstrated that both the factoring-based PoQ and LWE-based PoQ protocols can be directly upgraded to ZKPoQ using extractable-NIZK techniques. However, for sampling-based PoQ schemes, they do not align with the cryptographic interactive protocol framework and lack the capability for efficient classical verification, which makes such a general transformation challenging. This remains an open problem, and we will consider exploring this direction in future work. Additionally, for other LWE-based PoQ protocols such as those presented in~\cite{brakerski2020simpler, alagic2020non, KCVY, KLVY, alnawakhtha2024lattice,brakerski2023simple}, we note that these protocols often rely on different heuristic assumptions.  Therefore, it is not trivial to apply our approach to these schemes. However, these protocols closely resemble the one we considered, making it interesting to explore potential common characteristics that satisfy the transformation. This would be another future research direction. 
    \item Can interactive solutions in the plain model replace NIZKs in both ZKPoQ constructions, thereby eliminating the need for the CRS model? In our current construction, we chose to use post-quantum NIZK to minimize interaction between parties, but this comes at the cost of relying on a CRS. In order to remove the CRS and work in the plain model, one can instead resort to the interactive zero-knowledge proofs~\cite{goldreich1986proofs,bellare1992defining}. We will consider it as future work to formally explore the feasibility of this replacement. 
\end{itemize}

\section{Preliminaries}
\subsection{Notation}
We use the acronyms PPT and QPT for probabilistic polynomial time and quantum polynomial time respectively.
For a classical probabilistic algorithm~$\mathcal{A}$, we write~$\mathcal{A}(x ; r)$ to denote running~$\mathcal{A}$ on input~$x$, with input randomness~$r$. For a finite set~$S$, we use~$x \xleftarrow{\$} S$ to denote uniform sampling of~$x$ from the set~$S$. 
We denote~$[n] = \{1, 2, \cdots, n\}$. 
For  clarity, unless otherwise stated, the terms ``Verifier ($\mathcal{V}$)'' and ``Prover ($\mathcal{P}$)'' refer specifically to their roles in the PoQ scheme.

Conceptually, trapdoor claw-free functions (TCF) consist of a pair of injective functions~\(\{f_{k,0},f_{k,1}\}_k\) that share the same image. With access to a secret trapdoor~\(\td\), it becomes easy to determine the two preimages~\(\*x_0\) and~\(\*x_1\) of the same image~\(y\), such that~\(f_0(\*x_0) = f_1(\*x_1) = y\). However, it is computationally difficult to invert~\(\{f_{k,0},f_{k,1}\}_k\) without the trapdoor~$\td$. Such a pair of~$(\*x_0,\*x_1)$ is known as a claw, hence the name is claw-free. In the LWE-based scheme~\cite{brakerski2021cryptographic}, they use the Noisy TCF (NTCF) informally described by~$f'_{k,b}(\*x)=\*{Ax}+\*e'+b\cdot (\*{As}+\*e)$ for~$b \in \{0,1\}$ and~$k=(\*A, \*A\*s+\*e)$. In the following, we recall the definition of NTCF as well as its realization under LWE.
\begin{definition}[NTCF family,~\cite{brakerski2021cryptographic}]\label{def: ntcfs}
	Let $\lambda$ be a security parameter. Let $\X$ and $\Y$ be finite sets.
	 Let $\mathcal{K}_{\mathcal{F}}$ be a finite set of keys. A family of functions 
	$$\mathcal{F} \,=\, \big\{f_{k,b} : \X\rightarrow \mathcal{D}_{\Y} \big\}_{k\in \mathcal{K}_{\mathcal{F}},b\in\{0,1\}}$$
	is called a noisy trapdoor claw-free (NTCF) family if the following conditions hold:

	\begin{enumerate}
		\item{\textbf{Efficient Function Generation.}} There exists an efficient probabilistic algorithm $\genf$ which generates a key $k\in \mathcal{K}_{\mathcal{F}}$ together with a trapdoor $t_k$: 
		$$(k,t_k) \leftarrow \genf(1^\lambda)\;.$$

		\item{\textbf{Trapdoor Injective Pair.}}  
		\begin{enumerate}
			\item \textit{Trapdoor}: There exists an efficient deterministic algorithm $\invf$ such that with overwhelming probability over the choice of $(k,t_k) \gets \genf(1^{\lambda})$, the following holds: \\
			$$ \text{for all } b\in \{0,1\}, x\in \X \text{ and } y\in \supp(f_{k,b}(x)), \invf(t_k,b,y) = x. $$
			\item \textit{Injective pair}: For all keys $k\in \mathcal{K}_{\mathcal{F}}$, there exists a perfect matching $\R_k \subseteq \X \times \X$ such that $f_{k,0}(x_0) = f_{k,1}(x_1)$ if and only if $(x_0,x_1)\in \R_k$. 
		\end{enumerate}

		\item{\textbf{Efficient Range Superposition.}}
		For all keys $k\in \mathcal{K}_{\mathcal{F}}$ and $b\in \{0,1\}$ there exists a function $f'_{k,b}:\X\to \mathcal{D}_{\Y}$ such that the following hold.
		\begin{enumerate} 
			\item For all $(x_0,x_1)\in \mathcal{R}_k$ and $y\in \supp(f'_{k,b}(x_b))$, $\invf(t_k,b,y) = x_b$ and $\invf(t_k,b\oplus 1,y) = x_{b\oplus 1}$. 
			\item There exists an efficient deterministic procedure $\chkf$ that, on input $k$, $b\in \{0,1\}$, $x\in \X$ and $y\in \Y$, returns $1$ if $y\in \supp(f'_{k,b}(x))$ and $0$ otherwise. Note that $\chkf$ is not provided the trapdoor $t_k$. 
			\item For every $k$ and $b\in\{0,1\}$, 
			$$ \mathbb{E}_{x\leftarrow \X} \big[\,H^2(f_{k,b}(x),\,f'_{k,b}(x))\,\big] \,\leq\, \mu(\lambda) \;.$$
			 for some negligible function $\mu$. Here $H^2$ is the Hellinger distance. Moreover, there exists an efficient procedure  $\sampf$ that on input $k$ and $b\in\{0,1\}$ prepares the state
			\begin{equation*}
			    \frac{1}{\sqrt{|\X|}}\sum_{x\in \X,y\in \Y}\sqrt{(f'_{k,b}(x))(y)}\ket{x}\ket{y}\;.
			\end{equation*}
		\end{enumerate}

\item{\textbf{Adaptive Hardcore Bit.}}
For all keys $k\in \mathcal{K}_{\mathcal{F}}$ the following conditions hold, for some integer $w$ that is a polynomially bounded function of $\lambda$. 
\begin{enumerate}
\item For all $b\in \{0,1\}$ and $x\in \X$, there exists a set $G_{k,b,x}\subseteq \{0,1\}^{w}$ such that $\Pr_{d\leftarrow_U \{0,1\}^w}[d \notin G_{k,b,x}]$ is negligible, and moreover there exists an efficient algorithm that checks for membership in $G_{k,b,x}$ given $k,b,x$ and the trapdoor $t_k$. 
\item There is an efficiently computable injection $\mathcal{J}:\X\to \{0,1\}^w$, such that $\mathcal{J}$ can be inverted efficiently on its range, and such that the following holds. If
\begin{align*}\label{eq:defsetsH}
H_k = \big\{& (b,x_b,d,d\cdot(\mathcal{J}(x_0)\oplus \mathcal{J}(x_1)))\,|\; b\in \{0,1\}, (x_0,x_1)\in \mathcal{R}_k, \\ 
                     &  d\in G_{k,0,x_0}\cap G_{k,1,x_1}\big\},
                    \\
                \overline{H}_k = \{& (b,x_b,d,c)\,|\; (b,x,d,c\oplus 1) \in H_k\big\},
\end{align*}
then for any quantum polynomial-time procedure $\mathcal{A}$ there exists a negligible function $\negl(\cdot)$ such that 
\begin{equation*}
\left|\Pr_{(k,t_k)\leftarrow \textrm{GEN}_{\mathcal{F}}(1^{\lambda})}[\mathcal{A}(k) \in H_k] - \Pr_{(k,t_k)\leftarrow \textrm{GEN}_{\mathcal{F}}(1^{\lambda})}[\mathcal{A}(k) \in\overline{H}_k]\right| \leq \negl(\lambda).
\end{equation*}
\end{enumerate}

	 \end{enumerate}
\end{definition}

\begin{lemma}[{\cite[Theorem 4.1]{brakerski2021cryptographic}}]
Assuming the quantum hardness of LWE, there exists an LWE-based NTCF family with an adaptive hardcore bit property.
\end{lemma}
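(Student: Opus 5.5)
The plan is to instantiate the abstract NTCF definition (Definition~\ref{def: ntcfs}) with the concrete LWE-based family $f_{k,b}(\*x)=\*A\*x+\*e'+b\cdot(\*A\*s+\*e)$, where $k=(\*A,\*A\*s+\*e)$, and then verify the four conditions one by one, following the structure of \cite{brakerski2021cryptographic}.

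\textbf{Function generation and trapdoor.} For $\genf$ I would sample $\*A$ together with a lattice trapdoor $t_k$ using a standard trapdoor generation algorithm (Micciancio--Peikert style). I would take the output $\*A$ statistically close to uniform, sample $\*s\in\{0,1\}^n$, and take $\*e$ from a discrete Gaussian of small width $B_V$.

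The map $\invf(t_k,b,\*y)$ will first recover $\*x$ from $\*y-b\cdot(\*A\*s+\*e)=\*A\*x+\text{(small error)}$ using LWE inversion with the trapdoor. I would also recover $\*s$ itself from $\*A\*s+\*e$. For this to work, the parameters must keep the error below the decoding radius.

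\textbf{Injective pairs and range superposition.} The relation $\R_k=\{(\*x_0,\*x_1):\*x_1=\*x_0-\*s\}$ should give the perfect matching. For this I would define $\X=\mathbb{Z}_q^n$ and let $f_{k,b}(\*x)$ be the distribution of the output with Gaussian $\*e'$ of width $B_P\gg B_V$.

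The idealized function $f'_{k,b}$ drops the $\*e$ term. The Hellinger bound then follows from the statistical closeness of two Gaussians shifted by $\*e$, which needs $B_V/B_P$ to be negligible. The check $\chkf$ verifies that $\|\*y-\*A\*x-b(\*A\*s)\|$ is small. Since $\*A\*s$ is not public, I would instead phrase the check via $\*A\*s+\*e$ and a slightly enlarged support, as in the original paper. $\sampf$ is the standard preparation of a uniform superposition over $\*x$ with a Gaussian superposition over the error, followed by computing $\*y$.

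\textbf{Adaptive hardcore bit (main obstacle).} I would use $\mathcal{J}$ as the binary decomposition, so that $\mathcal{J}(\*x_0)\oplus\mathcal{J}(\*x_1)$ is determined by $\*s$ and $\*x_0$, and I would define $G_{k,b,\*x}$ accordingly.

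The core is showing that no QPT adversary can output $(b,\*x_b,\*d,c)$ with $c$ equal to the hardcore bit noticeably better than with its negation. My plan is a hybrid argument:
\begin{enumerate}
\item Replace $\*A$ by a lossy matrix $\*B\*C+\*F$, which is indistinguishable by LWE.
\item In the lossy mode, $\*s$ retains min-entropy given $\*A\*s+\*e$.
\item Apply the quantum-secure leftover hash lemma / Goldreich--Levin-type extraction to argue that $\*d\cdot\*s$ (the bit, after algebra) is statistically close to uniform given the adversary's view.
\end{enumerate}
This step requires careful parameter choices (superpolynomial modulus-to-noise ratio, $\*s$ binary) and a reduction showing that an adversary also computing $\*x_b$ does not help. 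I expect this entropy and hybrid argument to be the hardest part.
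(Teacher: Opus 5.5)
The paper gives no proof of this lemma; it imports it from \cite[Theorem 4.1]{brakerski2021cryptographic}. Your outline follows that construction in broad strokes: trapdoor $\*A$, binary $\*s$, Gaussian $\*e$, $\X=\mathbb{Z}_q^n$, $\R_k=\{(\*x,\*x-\*s)\}$, binary decomposition for $\mathcal{J}$, and a lossy matrix $\*B\*C+\*F$. However, you have $f$ and $f'$ the wrong way round, and with your assignment two conditions fail.

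With your $f_{k,b}(\*x)=\*A\*x+\*e'+b(\*A\*s+\*e)$, the distributions $f_{k,0}(\*x_0)$ and $f_{k,1}(\*x_0-\*s)$ are Gaussians centred at $\*A\*x_0$ and $\*A\*x_0+\*e$. They are not equal, so the injective-pair condition (exact equality iff $(\*x_0,\*x_1)\in\R_k$) is false. Your $f'$, which drops $\*e$, needs $\*A\*s$, and $\*A\*s$ cannot be computed from $k$. So $\sampf$ and $\chkf$ cannot be run without the trapdoor, and your patch for $\chkf$ via $\*A\*s+\*e$ is really a switch back to the other function. The correct assignment is $(f_{k,b}(\*x))(\*y)=D_{\mathbb{Z}_q^m,B_P}(\*y-\*A\*x-b\cdot\*A\*s)$, which has exact claws but is not efficiently sampleable. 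Its counterpart is $(f'_{k,b}(\*x))(\*y)=D_{\mathbb{Z}_q^m,B_P}(\*y-\*A\*x-b\cdot(\*A\*s+\*e))$, which is computable from $k$ alone. Your $B_V/B_P$ smudging then gives the Hellinger bound.

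The more serious gap is the adaptive hardcore bit. After linearising with $\*s\in\{0,1\}^n$, the bit becomes $\hat{\*d}\cdot\*s \bmod 2$ for a vector $\hat{\*d}$ computed from $(b,\*x_b,\*d)$, with $G_{k,b,\*x}$ chosen so that $\hat{\*d}\neq 0$. Crucially, $\hat{\*d}$ is chosen by the adversary after seeing the key.

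The leftover hash lemma and Goldreich--Levin both need a seed independent of the source and of the adversary's view, so they say nothing here. Residual min-entropy of $\*s$ cannot suffice on its own either. If $\*s$ were uniform on an $\mathbb{F}_2$-affine subspace of large dimension, an unbounded adversary could output a nonzero $\hat{\*d}$ orthogonal to it together with the correct bit.

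What is missing is a bound that holds uniformly over all nonzero $\hat{\*d}$:
\begin{itemize}
\item In the lossy hybrid, $\*F\*s$ must first be smudged by $\*e$. This needs $B_V/B_L$ superpolynomial, not just a large modulus-to-noise ratio.
\item After that, the view depends on $\*s$ only through $\*v=\*C\*s$. So, conditioned on the view, $\*s$ is uniform on the coset $\{\*s\in\{0,1\}^n:\*C\*s=\*v\}$.
\item A Fourier argument over random $\*C$, using $q$ prime and $n=\Omega(\ell\log q)$, then shows that except with negligible probability every such coset has negligible bias against every nonzero $\hat{\*d}$ simultaneously.
\end{itemize}
This structural small-bias statement is what neutralises adaptivity in \cite{brakerski2021cryptographic}.

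Two smaller points. Outputting $\*x_b$ costs the adversary nothing, since every $\*x\in\mathbb{Z}_q^n$ lies in a claw; no separate ``$\*x_b$ does not help'' reduction is needed. Also, no quantum-secure leftover hash lemma is needed: the adversary's input is classical, and quantumness enters only in the LWE step, where the reduction samples $\*s,\*e$ itself and can check the adversary's output.
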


Here, we recall an inverting algorithm with trapdoor in~\cite{C:MicPei13}, which is used in the LWE-based PoQ scheme~\cite{brakerski2021cryptographic}.
\begin{theorem}[{\cite[Theorem 5.1]{C:MicPei13}}]
\label{thm:trapdoor_samp}
There is an efficient algorithm~$\mathsf{GenTrap}$ that, on input~$1^n, q, m= \Omega(n \log q)$, outputs a matrix~$\*A$ distributed statistically close to uniformly on~$\mathbb{Z}_q^{n \times m}$, and a~$O(m)$-good lattice trapdoor~$\td$ for~$\*A$. Moreover, there is an efficient algorithm
$\mathsf{Invert}$ that, on input~$\*A, \td$ and~$\*s\*A + \*e$ where~$\lVert \*e \rVert \leq q/(C_T\sqrt{n \log q})$
 and~$C_T$ is a universal constant, returns~$\*s$
and~$\*e$ with overwhelming probability over~$(\*A, \td) \gets \mathsf{GenTrap}(1^n,1^m, q)$.
\end{theorem}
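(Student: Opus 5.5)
The plan is to reproduce the gadget-based construction of Micciancio and Peikert. Fix $k=\lceil\log q\rceil$ and the gadget matrix $\*G = \*I_n \otimes (1,2,4,\dots,2^{k-1}) \in \mathbb{Z}_q^{n\times nk}$. Write $m = \bar m + nk$ with $\bar m = \Omega(n\log q)$. $\mathsf{GenTrap}$ proceeds in three steps:
\begin{itemize}
\item sample $\bar{\*A} \xleftarrow{\$} \mathbb{Z}_q^{n\times \bar m}$;
\item sample a short matrix $\*R \in \mathbb{Z}^{\bar m \times nk}$ with independent entries from a narrow distribution, e.g.\ uniform on $\{-1,0,1\}$ or a discrete Gaussian of small parameter;
\item output $\*A = [\,\bar{\*A} \mid \*G - \bar{\*A}\*R\,]$ together with the trapdoor $\td = \*R$.
\end{itemize}
The defining identity of this trapdoor is $\*A \begin{bmatrix}\*R\\ \*I\end{bmatrix} = \*G$. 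The "$O(m)$-good" quality refers to the largest singular value $s_1(\*R)$, which will turn out to be $O(\sqrt{m})$, or to a related norm bound of the same order.

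\textbf{Step 1: statistical closeness.} I would show that $\bar{\*A}\*R$ is statistically close to uniform given $\bar{\*A}$. Apply the leftover hash lemma column by column: the family $\*r \mapsto \bar{\*A}\*r$ is universal, and each column of $\*R$ has min-entropy about $\bar m \log_2 3 \ge n\log q + \omega(\log\lambda)$. Then $\*G - \bar{\*A}\*R$ is close to uniform as well, so $\*A$ is statistically close to uniform on $\mathbb{Z}_q^{n\times m}$. A hybrid over the $nk$ columns costs only a polynomial factor in the statistical distance.

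\textbf{Step 2: inverting the gadget.} Given $\*b = \*s\*G + \*e'$ with each $|e'_i| < q/4$, one can recover $\*s$ coordinate by coordinate. For $q = 2^k$, read the bits of $s_j$ from the least significant upward: the entry $2^{k-1}s_j + e'$ reveals the lowest bit by rounding, subtracting the known part isolates the next bit, and so on. For general $q$, I would use the Micciancio–Peikert basis $\*S$ of $\Lambda^\perp(\*G)$, which satisfies $\|\tilde{\*S}\| \le \sqrt5$, together with Babai's nearest-plane algorithm. This decodes any error whose coordinates are at most $q/4$, or more precisely any error lying within the $\ell_2$ nearest-plane decoding region.

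\textbf{Step 3: reducing the general case to the gadget.} On input $\*b = \*s\*A + \*e$, $\mathsf{Invert}$ computes
\[
\*b\begin{bmatrix}\*R\\ \*I\end{bmatrix} = \*s\*G + \*e\begin{bmatrix}\*R\\ \*I\end{bmatrix}.
\]
The transformed error has each coordinate bounded by $\|\*e\|\cdot s_1\!\left(\begin{bmatrix}\*R\\ \*I\end{bmatrix}\right) \le \|\*e\|\,(1 + s_1(\*R))$. Standard subgaussian random-matrix bounds give $s_1(\*R) = O(\sqrt{\bar m} + \sqrt{nk}) = O(\sqrt{n\log q})$ except with negligible probability; this is why the conclusion holds only "with overwhelming probability over $\mathsf{GenTrap}$". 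Under the hypothesis $\|\*e\| \le q/(C_T\sqrt{n\log q})$, choosing $C_T$ larger than the hidden constant makes every coordinate smaller than $q/4$. Step 2 then returns $\*s$, and $\*e = \*b - \*s\*A$ follows.

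\textbf{Main obstacle.} I expect the difficulty to be the concrete accounting rather than any new idea. One part is the singular-value tail bound for $\*R$, with an explicit absolute constant that yields a universal $C_T$. The other is handling non-power-of-two $q$ in Step 2, where exact bit-peeling fails and the nearest-plane argument with the explicit $\*S$ is needed. I would first treat $q=2^k$ completely, and then invoke the Babai bound for general $q$.
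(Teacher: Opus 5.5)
The paper gives no proof of this statement; it only cites it from Micciancio--Peikert. Your outline is exactly the gadget-trapdoor argument behind the cited result: $\mathbf A=[\bar{\mathbf A}\mid\mathbf G-\bar{\mathbf A}\mathbf R]$, then $\mathbf b\mapsto\mathbf b\mathbf T$ with $\mathbf T=\begin{bmatrix}\mathbf R\\ \mathbf I\end{bmatrix}$, then gadget decoding, then a singular-value bound on $\mathbf R$. So the route is right. However, three claims are false as written and need small repairs.

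(i) In Step 1, the map $\mathbf r\mapsto\bar{\mathbf A}\mathbf r$ is not universal on $\{-1,0,1\}^{\bar m}$ when $q$ is even. For $\mathbf r-\mathbf r'=2\mathbf e_1$ the collision probability is $(2/q)^n$, and this covers the case $q=2^k$ you plan to treat first. The conclusion survives via the collision-probability form of the leftover hash lemma, with $\mathrm{CP}=\Pr[\bar{\mathbf A}\mathbf r=\bar{\mathbf A}\mathbf r']$. Differences with all entries in $\{0,\pm2\}$ occur with probability $(5/9)^{\bar m}$, so they add at most $(5/9)^{\bar m}2^n$ to $q^n\cdot\mathrm{CP}$, which is negligible. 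Do not repair this by switching to $\{0,1\}$ entries. That matrix is not centred, so $s_1(\mathbf R)$ becomes $\Theta(\sqrt{\bar m\,nk})$, and the error tolerance degrades to $q/O(n\log q)$.

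(ii) For general $q$, a coordinatewise bound of $q/4$ does not even permit unique decoding. Take $q=5$ and $\mathbf g=(1,2,4)$. Then $0\cdot\mathbf g+(1,1,0)$ and $1\cdot\mathbf g+(0,-1,1)$ agree mod $5$, although both errors have entries of size at most $1<q/4$. The right target is the nearest-plane condition $|\langle\hat{\mathbf e},\tilde{\mathbf s}_i\rangle|<q/2$, which is implied by $\|\hat{\mathbf e}\|<q/(2\sqrt5)$. The inequality you use in Step 3 already bounds the full $\ell_2$ norm, $\|\mathbf e\mathbf T\|\le\|\mathbf e\|\,s_1(\mathbf T)$, not just each coordinate. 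So you only need to calibrate $C_T$ against $2\sqrt5$ rather than a coordinatewise $q/4$.

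(iii) The identity $O(\sqrt{\bar m}+\sqrt{nk})=O(\sqrt{n\log q})$ presumes $\bar m=O(n\log q)$, whereas the statement allows any $m=\Omega(n\log q)$. For larger $m$, attach $\mathbf R$ to only $\Theta(n\log q)$ of the uniform columns. Give the remaining uniform columns of $\mathbf A$ zero rows in $\mathbf T$. Then $\mathbf A\mathbf T=\mathbf G$ still holds, and $s_1(\mathbf T)$, hence the admissible $\|\mathbf e\|$, stays independent of $m$.

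With these fixes your argument is complete; no idea is missing.
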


\subsection{Extractable-NIZK for NP in the CRS Model}

Notice that, in the following definition from~\cite{JK23}, we additionally consider a weaker notion of extractability, where the adversary is not allowed to query any simulated proof under the targeted CRS. 
We will denote the model simply by extractable-NIZK, where the simulation-extractability is replaced by extractability.
It is easy to see that the simulation-extractable NIZK implies the extractable-NIZK. The latter one is sufficient for our purpose.

\begin{definition}[Post-quantum (simulation-)extractable NIZK for~$\NP$ in CRS Model~\cite{JK23}]
\label{def:simext-nizk-crs}
Let~$\NP$ relation~$\cR$ with corresponding language~$\cL$ be given such that they can be indexed by a security parameter~$\lambda \in \bbN$.

$\Pi = (\Setup, \sP, \sV)$ is a post-quantum (quantum) non-interactive simulation-extractable zero-knowledge argument for~$\NP$ in the CRS model if it has the following syntax and properties.

\noindent \textbf{Syntax:}
The input~$1^\lambda$ is left out when it is clear from context.
\begin{itemize} 
    \item $\crs \gets \Setup(1^\lambda)$: The probabilistic polynomial-size algorithm~$\Setup$ on input~$1^\lambda$ outputs a common reference string~$\crs$.
    \item $\pi \gets \sP(1^\lambda, \crs, x, w)$: The probabilistic (quantum) polynomial-size algorithm~$\sP$ on input a common reference string~$\crs$ and instance and witness pair~$(x, w) \in \cR_\lambda$, outputs a proof~$\pi$.
    \item $\sV(1^\lambda, \crs, x, \pi) \in \zo$: The probabilistic (quantum) polynomial-size algorithm~$\sV$ on input a common reference string~$\crs$, an instance~$x$, and a proof~$\pi$ outputs~$1$ iff~$\pi$ is a valid proof for~$x$.
\end{itemize}

\noindent \textbf{Properties:}
A post-quantum simulation-extractable NIZK is secure if the following properties hold:

\begin{itemize}

    \item {\bf Perfect Completeness.}
    For every~$\lambda \in \bbN$ and every~$(x, w) \in \cR_\lambda$,
    \begin{equation*}
        \Pr_{\substack{\crs \gets \Setup(1^\lambda) \\ \pi \gets \sP(\crs, x, w)}}[\sV(\crs, x, \pi) = 1] = 1.
    \end{equation*}

    \item {\bf Adaptive Multi-Theorem Computational Zero-Knowledge.}
    There exists a probabilistic (quantum) polynomial-size algorithm~$\Sim = (\Sim_0, \Sim_1)$
    
    and a negligible function~$\negl(\cdot)$ such that for every polynomial-size quantum algorithm~$\cA$, and every sufficiently large~$\lambda \in \bbN$,
    {\small
    \begin{align*}
        \mathsf{adv}_{\mathsf{zk}} =&
        \left\vert \Pr_{\substack{\crs \gets \Setup(1^\lambda)}}[\cA^{\sP(\crs, \cdot, \cdot)}(\crs) = 1] - 
        \Pr_{\substack{(\crs, \td) \gets \Sim_0(1^\lambda)}}[\cA^{\Sim_1(\crs, \td, \cdot)}(\crs) = 1] \right\vert 
        \le~\negl(\lambda).
    \end{align*}
    }
    \item {\bf Simulation Extractability.}
    Let~$\Sim = (\Sim_0, \Sim_1)$ be the simulator given by the adaptive multi-theorem computational zero-knowledge property. There exists a polynomial-time extractor~$\Ext$ and a negligible function~$\negl(\cdot)$ such that for every oracle-aided polynomial-size quantum algorithm~$\cA$ and every~$\lambda \in \bbN$,
    \begin{equation*}
        \Pr_{\substack{(\crs, \td) \gets \Sim_0(1^\lambda) \\ (x, \pi) \gets \cA^{\Sim_1(\crs, \td, \cdot)}(\crs) \\ w \gets \Ext(\crs, \td, x, \pi)}}[\sV(\crs, x, \pi) = 1 \wedge x \not\in Q \wedge (x, w) \not\in \cR] \le \negl(\lambda),
    \end{equation*}
    where~$Q$ is the list of queries from~$\cA$ to~$\Sim_1$.

    \item {\bf Extractability.}
    Let~$\Sim$ be the simulator given by the adaptive multi-theorem computational zero-knowledge property. There exists a polynomial-time extractor~$\Ext$ and a negligible function~$\negl(\cdot)$ such that for every polynomial-size quantum algorithm~$\cA$ and every~$\lambda \in \bbN$,
    \begin{equation*}
        \Pr_{\substack{(\crs, \td) \gets \Sim(1^\lambda) \\ (x, \pi) \gets \cA(\crs)\\ w \gets \Ext(\crs, \td, x, \pi)}}[\sV(\crs, x, \pi) = 1 \wedge (x, w) \not\in \cR] \le \negl(\lambda).
    \end{equation*}
    \end{itemize}
\end{definition}

We emphasize that the extractor in the definition of extractability of the extractable-NIZK proof is purely classical, which is sufficient for our purpose. Looking ahead, one can notice that the classically secure extractable-NIZK proof is already enough for our transformation. However, we still recall the post-quantum secure version of extractable NIZK proof aiming for its potential applications in more functionalities such as certifying qubits and key leasing (as discussed in Section~\ref{sec:contribution}).

Next, we recall a result from~\cite{JK23} about an LWE-based construction satisfying the simulation-extractable NIZK definition.

\begin{theorem}[{\cite[Corollary 4.4]{JK23}}]
    \label{cor:simext-nizk-crs}
    Assuming the polynomial quantum hardness of LWE, there exists a post-quantum non-interactive simulation-extractable, adaptively multi-theorem computationally zero-knowledge argument for~$\NP$ in the common reference string model (Definition~\ref{def:simext-nizk-crs}).
\end{theorem}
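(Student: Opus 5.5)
Since this is \cite[Corollary 4.4]{JK23}, the plan is to rebuild the standard compiler that turns an adaptively zero-knowledge NIZK for $\NP$ into a simulation-extractable one, and to check that every reduction in it works against quantum adversaries. The ingredients, all post-quantum secure from LWE, are:
\begin{itemize}
\item[(i)] an adaptively sound, adaptive multi-theorem zero-knowledge NIZK $\Pi_0$ for $\NP$ in the CRS model (Peikert--Shiehian, via correlation-intractable hashing);
\item[(ii)] an IND-CPA public-key encryption scheme $(\mathsf{KG},\mathsf{Enc},\mathsf{Dec})$ with perfect correctness, obtained from Regev-style encryption with error truncation;
\item[(iii)] a strongly unforgeable one-time signature $(\mathsf{SGen},\mathsf{Sign},\mathsf{SVer})$, obtainable from any one-way function and hence from LWE;
\item[(iv)] a one-way function $g$, or equivalently a signature scheme.
\end{itemize}

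The construction is as follows.
\begin{itemize}
\item[$\Setup$:] output $\crs=(\crs_0,pk,y)$ with $y=g(r)$.
\item[$\sP$:] sample $(vk,sk)\gets\mathsf{SGen}$, compute $ct=\mathsf{Enc}(pk,w)$, and give a $\Pi_0$-proof $\pi_0$ of the OR-statement
\[
\bigl(\exists\, w,\rho:\ ct=\mathsf{Enc}(pk,w;\rho)\wedge(x,w)\in\cR\bigr)\ \vee\ \bigl(\exists\, \sigma,\rho:\ ct=\mathsf{Enc}(pk,\sigma;\rho)\wedge \sigma \text{ is a valid signature on } vk \text{ under a key derived from } y\bigr).
\]
Output $\pi=(vk,ct,\pi_0,\mathsf{Sign}(sk,(x,ct,\pi_0)))$.
\item[$\Sim_0$:] generate $pk$ together with $dk$, and keep the signing trapdoor; set $\td=(dk,\text{signing key})$.
\item[$\Sim_1$:] encrypt a signature on a fresh $vk$ and prove the second branch honestly.
\item[$\Ext$:] decrypt $ct$ with $dk$.
\end{itemize}

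Completeness is immediate. Zero-knowledge is a two-step hybrid argument. First, switch the honest prover to using the trapdoor branch; this is indistinguishable by the witness indistinguishability implied by $\Pi_0$'s zero-knowledge. Second, switch the encrypted plaintext from $w$ to a signature; this is indistinguishable by IND-CPA, which is fine because the zero-knowledge game never needs $dk$. Both reductions are straight-line and classical, so they hold for quantum $\cA$.

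For simulation extractability, suppose $\cA$, with oracle access to $\Sim_1$, outputs a fresh accepting $(x,\pi)$ for which decryption does not yield a valid witness. There are two cases.
\begin{itemize}
\item If $vk$ equals the verification key of some simulated proof, then, because $x\notin Q$ or the signed message differs, strong unforgeability of the one-time signature is broken.
\item If $vk$ is new, then by perfect correctness of the encryption and adaptive soundness of $\Pi_0$, $ct$ must encrypt a valid trapdoor-branch witness, namely a signature on a new $vk$. Decrypting it forges a signature under $y$.
\end{itemize}

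The hard part is this second case. The reduction needs $dk$ to decrypt, while it also has to answer simulation queries without the signing key. This is why the trapdoor relation must be a signature scheme or tag-based scheme rather than a bare preimage of $y$: the reduction answers simulation queries through the signing oracle, holds $dk$ itself, and extracts a forgery on a new $vk$. I would check carefully that soundness of $\Pi_0$ is applied to a statement that is false relative to the honestly generated $pk$, which is why perfect correctness is needed. The plain extractability property (no simulation queries) then follows as the special case $Q=\emptyset$.
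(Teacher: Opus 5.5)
The paper gives no proof of this statement. It is imported verbatim from \cite{JK23}, and the introduction only remarks that such an argument can be built from one-way functions and public-key encryption. Your reconstruction therefore has to stand on its own, and most of it does. The encrypt-the-witness plus OR-proof architecture, extraction by decryption, and your two-case simulation-extractability analysis are correct. You rightly note that $\Pi_0$'s adaptive soundness may be invoked because $\Sim_1$ runs the \emph{honest} $\Pi_0$ prover under an honest $\crs_0$. You also rightly note that the trapdoor must be a signature, so the forgery reduction can answer $\Sim_1$ queries via its signing oracle while holding $dk$. Two cosmetic points:
\begin{itemize}
\item The CRS should simply contain a verification key $VK$ of a many-time EUF-CMA signature scheme (from one-way functions, via straight-line reductions) rather than a bare $y=g(r)$.
\item The definition only requires $x\notin Q$, so the one-time signature layer is unnecessary. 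Taking the trapdoor branch to be ``$ct$ encrypts a signature on $x$ under $VK$'' makes your second case the only one. The one-time signatures buy the stronger notion in which only the pair $(x,\pi)$ must be new.
\end{itemize}

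The step that fails is the zero-knowledge hybrid. Your OR-statement uses a \emph{single} ciphertext $ct$ for both branches, and perfect correctness pins $ct$ to one plaintext. While $ct=\mathsf{Enc}(pk,w)$, the trapdoor branch holds only if $w$ itself is a valid signature on the fresh $vk$. So the prover has no trapdoor-branch witness to ``switch to,'' and witness indistinguishability gives nothing. Doing the IND-CPA step first fails symmetrically: once $ct$ encrypts $\sigma$, the first branch is false and the prover has no witness.

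Within your construction, the repair is to route through $\Pi_0$'s full simulator:
\begin{enumerate}
\item Replace $\crs_0$ and all $\Pi_0$-proofs by simulated ones. This is justified by zero-knowledge of $\Pi_0$, since the statements are still true via the first branch.
\item Let $ct$ encrypt $\sigma=\mathsf{Sign}(SK,vk)$. This is justified by IND-CPA: the reduction holds $SK$, never needs $dk$, and needs no witness because the $\Pi_0$-proofs are simulated.
\item Return to an honest $\crs_0$ and honest $\Pi_0$-proofs using the now-existing witness $(\sigma,\rho)$, by zero-knowledge of $\Pi_0$ on true statements. This final hybrid is exactly your $\Sim$.
\end{enumerate}
The simulated $\crs_0$ must live only inside this hybrid. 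Your actual $\Sim_0$ must keep $\crs_0$ honest, or the soundness step in your extraction argument breaks.

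Alternatively, use two ciphertexts: $ct_1$ for the witness and $ct_2$ for the signature, with the statement ``$ct_1$ encrypts a witness for $x$, or $ct_2$ encrypts a signature on $vk$.'' The prover encrypts $(w,0)$ and the simulator encrypts $(0,\sigma)$. Then IND-CPA on $ct_2$, WI (both branches now hold), and IND-CPA on $ct_1$ goes through. $\Ext$ decrypts $ct_1$ and the forgery reduction decrypts $ct_2$.
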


Note that the construction provided by~\cite{JK23} achieves a stronger notion than the one we need, but it is the closest one that we are aware of. Therefore, we take it as an example of extractable-NIZK for instantiating our transformation.

 \subsection{Recall Approaches for Proofs of Quantumness}

We first recall the definition of the quantum-prover interactive proof.
\begin{definition}[QPIP: Quantum-prover interactive proof]
\label{QPIP}
A quantum-prover interactive proof (QPIP) system is an interactive scheme between two polynomially bounded parties, a quantum prover and a classical verifier interacting over a classical channel.
\begin{enumerate}
    \item A semi-honest QPIP system is described by a pair of algorithms: the PPT algorithm of the honest-but-curious verifier~$\mathcal{V}$ and the QPT algorithm of prover~$\mathcal{P}$;

    \item A malicious QPIP system described by a pair of algorithms: the PPT algorithm of the malicious verifier~$\mathcal{V}^*$ and the QPT algorithm of prover~$\mathcal{P}$;
\end{enumerate}
    
\end{definition}

We first state the straightforward quantum completeness and classical soundness of the proofs of quantumness (PoQ) scheme based on factoring.

\begin{lemma}[Quantum Completeness]
    Due to Shor's quantum polynomial-time factoring algorithm, a QPT prover,~$\mathcal{P}$, following the honest strategy in the scheme~$\Sigma_{\mathsf{Factoring}}$ shown in Figure~\ref{fig:PoQ_factoring} is accepted with probability~$1-\negl(\lambda)$.
        
    \end{lemma}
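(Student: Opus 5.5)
The plan is to make the scheme $\Sigma_{\mathsf{Factoring}}$ of Figure~\ref{fig:PoQ_factoring} explicit and then track the honest prover's success event. I will take the scheme in its standard form. The verifier samples two random primes $p,q$ of bit-length $\lambda$, sets $N=pq$, and sends $N$. The prover must return a nontrivial factorization $(p',q')$ with $p'q'=N$ and $1<p'<N$, and the verifier checks this classically. The honest prover runs Shor's algorithm~\cite{shor1994algorithms} on $N$. So the claim reduces to showing that Shor's algorithm, possibly repeated polynomially many times, outputs a nontrivial factor of $N$ except with probability $\negl(\lambda)$.

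The key steps are as follows.

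First, I recall the structure of Shor's algorithm. The prover picks a uniform $a\in\mathbb{Z}_N$. If $\gcd(a,N)>1$, it is already done. Otherwise it uses quantum period finding, via the quantum Fourier transform and continued fractions, to obtain the order $r$ of $a$ modulo $N$. The standard analysis gives this with probability $\Omega(1/\log\log N)$ per run, and correctness of $r$ can be checked classically by testing $a^r\equiv 1 \bmod N$.

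Second, I invoke the number-theoretic lemma for $N=pq$ with $p,q$ odd and distinct. For uniform $a\in\mathbb{Z}_N^*$, with probability at least $1/2$ the order $r$ is even and $a^{r/2}\not\equiv -1 \pmod N$. In that case $\gcd(a^{r/2}-1,N)$ is a nontrivial factor. So each independent trial succeeds with probability at least $\epsilon=\Omega(1/\log\lambda)$, and success is classically detectable by checking the product.

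Third, I amplify. The prover repeats up to $t=\lambda\cdot\lceil 1/\epsilon\rceil$ times, which is polynomial, and halts on the first verified factorization. By independence, the failure probability is at most $(1-\epsilon)^t\le e^{-\lambda}$, which is negligible. The total runtime is polynomial since each trial costs $\mathrm{poly}(\lambda)$ quantum gates. Any residual error from approximate QFT implementation is also absorbed into $\negl(\lambda)$. The verifier's check is deterministic and accepts every correct factorization, so the acceptance probability is $1-\negl(\lambda)$.

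There are two points to handle. One is the degenerate case $p=q$ or an even factor. The verifier chooses distinct odd primes except with negligible probability, and in any case $\gcd$ and perfect-square checks handle these inputs classically. The other is making the per-trial success bound of the quantum period-finding step precise. I will not rederive it, but cite Shor's analysis as the established inverse-polylogarithmic lower bound; the lemma itself is attributed to Shor's algorithm. The main care is stating that bound correctly, and the rest is routine amplification.
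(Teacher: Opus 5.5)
Your proposal is correct and is essentially the paper's argument. The paper gives no proof beyond attributing completeness to Shor's algorithm, and you unpack that attribution into the standard order-finding analysis, the $\geq 1/2$ bound for $N=pq$ with distinct odd primes, and classically verified repetition that drives the failure probability to $e^{-\lambda}$. (Minor point: the QFT approximation error need not itself be negligible, since it only lowers the per-trial success probability $\epsilon$, which your verified repetition already absorbs.)
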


    \begin{lemma} [Classical Soundness]
        Assuming large integer factoring is classically intractable for any PPT prover~$\mathcal{P'}$, the prover~$\mathcal{P'}$ can convince the~$\mathcal{V}$ to accept with only negligible probability in scheme~$\Sigma_{\mathsf{Factoring}}$ shown in Figure~\ref{fig:PoQ_factoring}.
    \end{lemma}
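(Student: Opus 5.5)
The statement is the classical soundness of $\Sigma_{\mathsf{Factoring}}$: a PPT prover that convinces $\mathcal{V}$ to accept with non-negligible probability can be turned into a PPT factoring algorithm. I will prove this by a direct reduction.

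First I fix the form of the scheme in Figure~\ref{fig:PoQ_factoring}. The honest verifier $\mathcal{V}$ samples two random $\lambda$-bit primes $p,q$, sets $N=pq$, and sends $N$ to the prover. It then accepts a response $(p',q')$ if and only if $p'q'=N$ and $1<p',q'<N$. Membership in the acceptance set is classically checkable in polynomial time. By construction, acceptance implies that $(p',q')$ is a nontrivial factorization of $N$.

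Suppose some PPT prover $\mathcal{P'}$ makes $\mathcal{V}$ accept with probability $\varepsilon(\lambda)$. I build a factoring adversary $\mathcal{B}$ as follows:
\begin{enumerate}
\item $\mathcal{B}$ receives a challenge $N$ drawn from the factoring distribution, namely a product of two random $\lambda$-bit primes.
\item $\mathcal{B}$ forwards $N$ to $\mathcal{P'}$ as the verifier's first message, using fresh randomness for $\mathcal{P'}$.
\item $\mathcal{B}$ outputs whatever $(p',q')$ the prover returns.
\end{enumerate}

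The verifier's message depends only on $N$. Therefore the view of $\mathcal{P'}$ inside $\mathcal{B}$ is identically distributed to its view in a real execution with $\mathcal{V}$. It follows that $\mathcal{B}$ outputs a nontrivial factorization with probability exactly $\varepsilon(\lambda)$. The adversary $\mathcal{B}$ runs in PPT because $\mathcal{P'}$ does and the forwarding is trivial. The assumption that factoring is classically intractable then forces $\varepsilon$ to be negligible.

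If the scheme instead repeats the challenge or uses several rounds, I would handle this by guessing which challenge to embed $N$ into. The verifier generates all other challenges itself. This costs at most a polynomial loss in the reduction.

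The only step needing care is making sure the embedded $N$ matches the distribution the hardness assumption is stated over. I would address this by defining the assumption over exactly the verifier's sampling distribution, namely RSA moduli from random $\lambda$-bit primes. Beyond that, the argument is routine.
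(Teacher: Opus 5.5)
Your forward-the-challenge reduction is correct and matches the paper's approach. The paper calls this lemma straightforward and gives no written proof, and it uses the same move in $\mathsf{Game}_2$ of the proof of Lemma~\ref{Property:Csound-Factor}: the factoring challenge is embedded as the verifier's modulus and the prover's $(p,q)$ is relayed back. Your remark about guessing a round to embed into is unnecessary, since $\Sigma_{\mathsf{Factoring}}$ has a single challenge.
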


\begin{figure}[htbp]
\begin{framed}
\centering
\begin{minipage}{1.0\textwidth}
\begin{center}
    \underline{$\Sigma_{\mathsf{Factoring}}$: PoQ for Factoring~$N'$}
\end{center}

\vspace{1mm}
Fix a security parameter~$\lambda$. Let~\( \mathcal{P} \) denote a quantum prover and~\( \mathcal{V} \) denote a classical verifier, and~$\langle \mathcal{P},\mathcal{V}\rangle$ be a QPIP system for a PoQ scheme.
Let~$N'$ be the product of two primes~$p$ and~$q$.

\vspace{1mm}
\noindent {$\mathcal{V}$}: Prepare two random large primes~$(p,q)$ and compute~$N'$. Send~$N'$ to the~$\mathcal{P}$.

\vspace{1mm}
\noindent {$\mathcal{P}$}: Compute~$(p, q) \gets \mathsf{Q.factoring}(N')$ by using Shor's quantum polynomial time algorithm~$\mathsf{Q.factoring}(N')$. Send~$(p, q)$ to the~$\mathcal{V}$. 

\vspace{1mm}
\noindent {$\mathcal{V}$}: Check the validity of~$(p,q)$ and output~$\langle \mathcal{P},\mathcal{V}\rangle=1$ if it passes; else output~$0$ and abort.

\end{minipage}
\end{framed}
\caption{$\Sigma_{\mathsf{Factoring}}$: Factoring-based Proofs of Quantumness Scheme.}
\label{fig:PoQ_factoring}
\end{figure}

\begin{figure}[htbp]
\begin{framed}
\centering
\begin{minipage}{1.0\textwidth}
\begin{center}
    \underline{$\Sigma_{\mathsf{BCMVV}}$: LWE-based PoQ Scheme (Parallel repetition version)}
\end{center}

\vspace{1mm}
Fix a security parameter~$\lambda$ and an NTCF family~$\mathcal{F} \,=\, \big\{f'_{k,b} : \X\rightarrow \mathcal{D}_{\Y} \big\}_{k\in \mathcal{K},b\in\{0,1\}}$  described by algorithms~$(\textsc{Gen}_{\mathcal{F}},\textsc{Samp}_{\mathcal{F}},\textsc{Inv}_{\mathcal{F}},\textsc{Chk}_{\mathcal{F}})$, assuming that LWE is classically intractable. 
Let~\( \mathcal{P} \) denote a quantum prover and~\( \mathcal{V} \) denote a classical verifier, and~$\langle \mathcal{P},\mathcal{V}\rangle$ be a QPIP system for a PoQ scheme.
Repeat the following steps~$\lambda$ times:

\vspace{1mm}
\noindent {$\mathcal{V}$: Prepare~$(k,t_k)\gets \textsc{Gen}_{\mathcal{F}}(1^\lambda)$ and send~$k$ to the prover~$\mathcal{P}$, where~$k=(\mathbf{A}',\mathbf{A}'\mathbf{s}'+\mathbf{e}')$ and~$t_k=\mathsf{td}_{\mathbf{A}'}$ is the trapdoor}.

\vspace{1mm}
\noindent {$\mathcal{P}$: Run~$\textsc{Samp}_{\mathcal{F}}(1^\lambda)$ and measure the image register to yield a string~$\textbf{y}$, send~$\textbf{y}$ to the verifier~$\mathcal{V}$.

\vspace{1mm}
\noindent {$\mathcal{V}$: Sample a uniformly random challenge bit~$c \xleftarrow{\$} \{0,1\}$} and send~$c$ to the prover~$\mathcal{P}$.

\vspace{1mm}
\noindent {$\mathcal{P}$}: Take in the challenge~$c$, do:
\begin{itemize}
    \item Preimage test (if~$c=0$): Perform a standard basis measurement, return a pair~$(b,\mathbf{x}) \in \{0,1\}\times\{0,1\}^n$ as the proof~$\sigma_0=(b,\mathbf{x})$. 
    \item Equation test (if~$c=1$): Perform a Hadamard basis measurement, return a pair~$(u,\mathbf{d})\in \{0,1\}\times\{0,1\}^n$ as the proof~$\sigma_1=(u,\mathbf{d})$.
    \item Send~$\sigma_c$ to the verifier~$\mathcal{V}$.
    
\end{itemize}

\vspace{1mm}
\noindent {$\mathcal{V}$}}: Take in~$(t_k,\textbf{y},c,\sigma_c)$ do:
\begin{itemize}
    \item Compute~$(\mathbf{x}_0,\mathbf{x}_1)\gets \textsc{Inv}_{\mathcal{F}}(1^\lambda,t_k,\textbf{y})$, where~$\mathbf{x}_1= \mathbf{x}_0 -\mathbf{s}' \mod q$. 
    \item Check the validity of~$\sigma_c$ and output~$\langle \mathcal{P},\mathcal{V}\rangle$, which is defined as 
    \begin{align*} 
        \langle \mathcal{P},\mathcal{V}\rangle:= 
\begin{cases} 
1 & \text{if } c = 0 \text{ and } \textsc{Chk}_{\mathcal{F}}(k, \textbf{y}, b, \mathbf{x}) = 1, \\
1 & \text{if } c = 1, d\in G_{k,0,x_0}\cap G_{k,1,x_1} \text{ and } \mathbf{d}^\top\cdot (\mathbf{x}_0 \oplus \mathbf{x}_1) \bmod 2 = u, \\
0 & \text{otherwise}.
\end{cases}
    \end{align*}

\end{itemize}
\noindent At the end of the~$\lambda$ rounds, if the verifier~$\mathcal{V}$ has not aborted it accepts.
\end{minipage}
\end{framed}
\caption{$\Sigma_{\mathsf{BCMVV}}$: LWE-based Proofs of Quantumness Scheme~\cite{brakerski2021cryptographic}.}
\label{fig:BCMVV}
\end{figure}

Now we recall the quantum completeness and classical soundness of the PoQ scheme in~\cite{brakerski2021cryptographic} as follows.

    \begin{lemma}[Quantum Completeness,~\cite{brakerski2021cryptographic}]
    A QPT prover,~$\mathcal{P}$, following the honest strategy in the scheme~$\Sigma_{\mathsf{BCMVV}}$ (as shown in Figure~\ref{fig:BCMVV}) is accepted with probability~$1-\negl(\lambda)$.
        
    \end{lemma}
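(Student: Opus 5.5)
Since the statement is a per-round completeness guarantee of~$\Sigma_{\mathsf{BCMVV}}$ amplified over~$\lambda$ parallel repetitions, I will first analyze a single round and then take a union bound over the~$\lambda$ rounds. In a single round the honest prover runs~$\sampf$ on~$k$ for both values of~$b$ in superposition, i.e.\ it prepares
\[
\frac{1}{\sqrt{2|\X|}}\sum_{b\in\zo}\sum_{x\in\X,\,y\in\Y}\sqrt{(f'_{k,b}(x))(y)}\ket{b}\ket{x}\ket{y},
\]
and measures the image register to obtain~$\*y$. The key intermediate claim is that, with overwhelming probability over~$(k,t_k)\gets\genf(1^\lambda)$ and the measured~$\*y$, the post-measurement state is negligibly close to~$\frac{1}{\sqrt2}(\ket{0,\*x_0}+\ket{1,\*x_1})$ with~$(\*x_0,\*x_1)\in\R_k$. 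I will derive this from the Injective Pair property together with item~3(a) of Definition~\ref{def: ntcfs}, which guarantees that~$\invf$ recovers both preimages from~$\*y$. The closeness between~$f$ and~$f'$ is controlled by the Hellinger bound in item~3(c), which converts into a trace-distance bound of order~$\sqrt{\mu(\lambda)}$.

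For the preimage test~$c=0$, measuring in the standard basis yields some~$(b,\*x_b)$ with~$\*y\in\supp(f'_{k,b}(\*x_b))$, so~$\chkf$ accepts. For the equation test~$c=1$, applying Hadamards to the~$(b,\*x)$ registers (using the injection~$\mathcal{J}$ to view~$\*x$ as a bit string) and measuring gives a uniformly random~$\*d$ together with the bit~$u=\*d\cdot(\mathcal{J}(\*x_0)\oplus\mathcal{J}(\*x_1))$. This holds exactly for the ideal state and up to negligible error for the actual one. The remaining requirement is~$\*d\in G_{k,0,\*x_0}\cap G_{k,1,\*x_1}$, which by item~4(a) and a union bound fails only with negligible probability because~$\*d$ is uniform.

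Each round therefore fails with probability at most some negligible~$\nu(\lambda)$, and a union bound over the~$\lambda$ rounds gives acceptance probability at least~$1-\lambda\nu(\lambda)=1-\negl(\lambda)$.

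I expect the main obstacle to be the careful transfer from~$f'$ (what~$\sampf$ actually prepares) to~$f$ and the ideal claw state. The Hellinger bound in item~3(c) is stated only in expectation over~$x$, so I would first try to directly cite the per-round completeness lemma from~\cite{brakerski2021cryptographic}, which already performs this calculation, and reduce our statement to it plus the union bound.
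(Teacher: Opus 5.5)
Your proposal is correct and matches the paper, which gives no argument of its own and simply imports this completeness statement from \cite{brakerski2021cryptographic}. Your sketch faithfully reconstructs that argument (transfer from $f'$ to $f$ via the Hellinger bound, exact claw state for $f$, uniform $\*d$ landing in $G_{k,0,\*x_0}\cap G_{k,1,\*x_1}$, then a union bound over the $\lambda$ rounds), and the obstacle you anticipate does not actually arise: with $H^2(P,Q)=1-\sum_y\sqrt{P(y)Q(y)}$, the overlap between the full states prepared with $f$ and with $f'$ equals $1-\frac{1}{2}\sum_{b}\mathbb{E}_{x\leftarrow\X}[H^2(f_{k,b}(x),f'_{k,b}(x))]\geq 1-\mu(\lambda)$, so the averaged bound of item~3(c) is exactly what is needed.
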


    \begin{lemma} [Classical Soundness,~\cite{brakerski2021cryptographic}]
        Assume that LWE is classically intractable. For any PPT prover,~$\mathcal{P'}$, in the parallel repetition version of the scheme~$\Sigma_{\mathsf{BCMVV}}$ (as shown in Figure~\ref{fig:BCMVV}), the prover~$\mathcal{P'}$ convinces the~$\mathcal{V}$ to accept with negligible probability.
    \end{lemma}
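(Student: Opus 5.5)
The statement is the classical soundness of the parallel-repeated $\Sigma_{\mathsf{BCMVV}}$, and I will reduce it to the adaptive hardcore bit property of Definition~\ref{def: ntcfs}. Everything happens in the plain classical setting. The LWE-based NTCF family is guaranteed by \cite[Theorem 4.1]{brakerski2021cryptographic}, and since its hardcore-bit guarantee holds against quantum adversaries, it holds in particular against PPT ones.

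\textbf{Step 1: reduce to a single round via rewinding.} Fix a PPT prover $\mathcal{P}'$ that makes $\mathcal{V}$ accept with non-negligible probability $\varepsilon$. The $\lambda$ rounds use independent keys and independent challenges, so I will embed a challenge key $k$ into a uniformly chosen round $i$ and simulate all other rounds myself. I generate those keys with trapdoors, so I can answer both types of test for them.

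Since $\mathcal{P}'$ is classical, I can fix its random tape and everything up to the point where it outputs $\mathbf{y}$ in round $i$. I then run it twice from that state, once with challenge $c=0$ and once with $c=1$. A standard averaging (forking-style) argument shows that, with probability at least $\mathrm{poly}(\varepsilon)$, both continuations make round $i$ accept. In that event I obtain a valid preimage $(b,\mathbf{x})$ with $\textsc{Chk}_{\mathcal{F}}(k,\mathbf{y},b,\mathbf{x})=1$ and an equation $(u,\mathbf{d})$ with $\mathbf{d}\in G_{k,0,x_0}\cap G_{k,1,x_1}$ and $u=\mathbf{d}^\top(\mathcal{J}(x_0)\oplus\mathcal{J}(x_1))$.

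\textbf{Step 2: build the hardcore-bit adversary.} Range superposition part (a) ensures that $\mathbf{x}=x_b$ for the claw $(x_0,x_1)\in\mathcal{R}_k$ determined by $\mathbf{y}$. Hence the tuple $(b,\mathbf{x},\mathbf{d},u)$ lies in $H_k$ whenever both branches accept, which happens with probability $\delta\geq \mathrm{poly}(\varepsilon)$.

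I also need to control the probability of landing in $\overline{H}_k$. To do this I compare the adversary against the "ideal" rate at which the equation branch alone succeeds. My concrete plan is the usual BCMVV bookkeeping: let $p_0$ and $p_1$ be the acceptance probabilities of the two branches at the fixed state. The rewinding then gives $H_k$-membership with probability at least $p_0+p_1-1$, while the output lands in $\overline{H}_k$ with probability at most $1-p_1$. So if $\mathcal{P}'$ passes with $p_0+p_1$ noticeably above $3/2$ on average, the advantage $\Pr[H_k]-\Pr[\overline{H}_k]$ is non-negligible. This contradicts the adaptive hardcore bit property.

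\textbf{Step 3: pass from per-round advantage to parallel repetition.} This is the main obstacle. A single round only pushes the classical success probability below roughly $3/4$, not to a negligible quantity, so I must show that $\lambda$ parallel rounds drive acceptance down to negligible. Parallel repetition for computationally sound protocols is delicate in general. The plan is to exploit the fact that acceptance requires every round to pass. I will use a hybrid/averaging argument: if all $\lambda$ rounds pass with probability $\varepsilon$, then conditioned on the transcript prefix, some round $i$ must satisfy $p_0^{(i)}+p_1^{(i)}$ close to $2$ with noticeable probability. Concretely, for a random round the conditional failure probability is at most $O(\log(1/\varepsilon)/\lambda)$.

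Rewinding only that round while keeping the other rounds' randomness fixed, and using the trapdoors I hold for the other rounds, then gives a hardcore-bit adversary with advantage close to $1$. If this direct argument gets stuck on correlations across rounds, I would fall back on invoking a known parallel repetition theorem for three-message computationally sound arguments with efficiently checkable verifiers. In that case I would note that verification here uses the secret trapdoors, and that this is handled because the reduction generates all but one key itself.
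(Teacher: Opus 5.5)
The paper gives no proof of this lemma; it simply imports it from BCMVV. Your Steps 1--2 are the single-round rewinding argument behind that citation, and they are fine. From one fixed state, $\Pr[H_k]\ge p_0+p_1-1$ and $\Pr[\overline{H}_k]\le 1-p_1$, so the advantage is at least $2(p_0+p_1)-3$, and no classical prover passes one round with probability noticeably above $3/4$. One small point: the lemma assumes only \emph{classical} intractability of LWE. So rather than appealing to quantum hardness, you should check that the hardcore-bit reduction turns a PPT adversary into a PPT LWE algorithm.

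The genuine gap is Step 3, which carries the entire claim that acceptance is negligible. Write $W_j$ for the event that round $j$ passes.

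Your direct argument is really the \emph{sequential}-repetition proof. There, the reduction can play rounds $<i$ with its own trapdoors, rejection-sample until they pass, and only then send the embedded key. After that, $c_i$ is a fresh uniform bit independent of the prefix. In the parallel version, all $c_j$ are sent together and answered jointly, so there is no transcript prefix, and $W_{<i}$ can be correlated with $c_i$.

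This breaks the step in two ways. First, the bound $\Pr[\neg W_i\mid W_{<i}]\le O(\log(1/\varepsilon)/\lambda)$ holds under a distribution in which $c_i$ may be far from uniform (e.g.\ concentrated on $0$). It therefore says nothing about the equation branch, which is precisely the branch your reduction cannot check (it has no trapdoor for the embedded key) and whose correctness must exceed $1/2$. Second, flipping $c_i$ while ``keeping the other rounds' randomness fixed'' does not sample from that conditional distribution either, since the flip can destroy $W_{<i}$.

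What is needed is a single-round prover that, on the external uniform $c_i$, rejection-samples $c_{-i}$ (with a time-out) until the checkable rounds pass. You then need a proof that, for a random $i$, this prover beats $3/4$ by a noticeable margin. That is a Bellare--Impagliazzo--Naor-style analysis controlling both the bias of $c_i$ and the time-out loss.

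The fallback does not supply this. $\Sigma_{\mathsf{BCMVV}}$ is a four-message private-coin protocol $(k,\mathbf{y},c,\sigma)$, which is the class where parallel repetition is known to fail in general. To use a three-message theorem, you must argue explicitly that, because $c$ is a public coin independent of $(k,t_k)$, the reduction can fix all keys and $\mathbf{y}$'s and rewind only the challenges.
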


\section{Zero-Knowledge Proofs of Quantumness Based on Extractable-NIZK}
\label{Sec:ZKPOQ demonst.}

In this section, we initially present the formal definition of zero-knowledge proofs of quantumness. Subsequently, we introduce two constructions that can be provably secure under this definition.

\subsection{Zero-Knowledge Proofs of Quantumness}

In the following definition, we will supplement the zero-knowledge notion to the existing proofs of quantumness definition. Our zero-knowledge property is also formalized by indistinguishability between real transcript and simulated one. In particular, our zero-knowledge protocol requires indistinguishability between the communication initiated by a quantum prover and the one by a classical prover when interacting with the same verifier.

\begin{definition} [Zero-knowledge proofs of quantumness, ZKPoQ] 
\label{ZKPOQ}
Let~\( \mathcal{P} \) denote a quantum prover and~\( \mathcal{V} \) denote a classical verifier. we say that~$\langle \mathcal{P},\mathcal{V}\rangle$ is a QPIP system for zero-knowledge proofs of quantumness if the following properties are satisfied:
\begin{itemize}
        \item \textbf{Quantum Completeness}: Let~\( \lambda \in \mathbb{N} \) be the security parameter. Given the QPT prover~\( \mathcal{P} \), there exists a negligible function in~\( \lambda \) such that:
   \[
   \Pr[ \langle \mathcal{V}(1^\lambda),\mathcal{P}(1^{\lambda}) \rangle =1  ] \geq 1 - \operatorname{negl}(\lambda)
   \]

        \item \textbf{Classical Soundness}: For any PPT prover~\( \mathcal{P}' \), there exists a negligible function in~\( \lambda \) such that:
   \[
   \Pr [\langle \mathcal{V}(1^\lambda),\mathcal{P'}(1^{\lambda}) \rangle =1 ] \leq  \operatorname{negl}(\lambda)
   \]

        \item \textbf{Computational Zero-Knowledge}: For any PPT verifier~$\mathcal{V^*}$ and QPT prover~$\mathcal{P}$, there exists a PPT simulator algorithm~\( \mathsf{Sim}_{\mathcal{V^*}} \) has assess to oracle~$\mathcal{V^*}$, such that:
   \[
   \mathsf{Sim}_{\mathcal{V^*}}(1^\lambda) \approx_c \mathsf{View}[\langle \mathcal{V^*}(1^\lambda) \leftrightarrow \mathcal{P}(1^\lambda) \rangle]
   \]
   where~\( \approx_c \) represents computational indistinguishability.
    \end{itemize}
    
\end{definition}

\subsection{Construction of a Factoring ZKPoQ Scheme}

\begin{figure}[t]
\begin{framed}
\centering
\begin{minipage}{1.0\textwidth}
\begin{center}
\underline{$\mathsf{ZK}.\Sigma_{\mathsf{Factoring}}$: ZKPoQ for Factoring~$N$}
\end{center}

\vspace{1mm}
Fix a security parameter~$\lambda$. Let~\( \mathcal{P} \) denote a quantum prover and~\( \mathcal{V} \) denote a classical verifier. Let~$\langle \mathcal{P},\mathcal{V}\rangle$ be a QPIP system for ZKPoQ scheme described by algorithms~$(\textsc{Setup}, \textsc{Nizk}, \textsc{Prove}, \textsc{Verify})$.
Let~$N$ be the product of two primes~$p$ and~$q$.
Let~$\Pi = (\Setup, \sP, \sV)$ be an extractable non-interactive, adaptively multi-theorem computationally zero-knowledge (extractable-NIZK) scheme for factoring~$N$.

\vspace{1mm}
\noindent {\underline{\textsc{Setup}}$(1^\lambda)$}:
\begin{itemize}
    \item  $(\crs, \td) \gets \Pi.\Setup(1^\lambda)$.

\end{itemize}

\vspace{1mm}
\noindent {$\mathcal{V}$: \underline{\textsc{Nizk}}$(\crs)$}:
\begin{itemize}
    \item Prepare large primes~$(p,q)$ and compute~$N$.
    \item Compute proof~$\pi \gets \Pi.\sP(\crs, N, (p,q))$ for factoring~$N$.

    \item Send~$(N,\pi)$ to quantum prover~$\mathcal{P}$.
\end{itemize}

\vspace{1mm}
\noindent {$\mathcal{P}$: \underline{\textsc{Prove}}$(\crs, N, \pi)$}:
\begin{itemize}
    \item Compute~$\Pi.\sV(\crs, N, \pi)$ and continue if it passes; else output~$\perp$ and abort. 
    \item Compute quantumness proof~$\sigma \gets \mathsf{Q.factoring}(N)$.
    \item Send~$\sigma = (p,q)$ to classical verifier~$\mathcal{V}$.
\end{itemize}

\vspace{1mm}
\noindent {$\mathcal{V}$: \underline{\textsc{Verify}}$(\sigma)$}:
\begin{itemize}
    \item Check the validity of~$\sigma$ and output~$\langle \mathcal{P},\mathcal{V}\rangle=1$ if it passes; else output~$0$ and abort.
\end{itemize}

\end{minipage}
\end{framed}
\caption{$\mathsf{ZK}.\Sigma_{\mathsf{Factoring}}$: Factoring-based Zero-knowledge Proofs of Quantumness Scheme.}
\label{fig:ZKPOQ-Fact}
\end{figure}

We refer to Figure~\ref{fig:ZKPOQ-Fact} for our factoring-based ZKPoQ scheme.

\begin{theorem}\label{thm:factoring_ZKPoQ}
    Assuming that the LWE problem is classically intractable, the protocol~$\mathsf{ZK}.\Sigma_{\mathsf{Factoring}}$ described in Figure~\ref{fig:ZKPOQ-Fact} is a ZKPoQ  (Definition~\ref{ZKPOQ}) scheme satisfying quantum completeness, classical soundness, and computational zero-knowledge.
\end{theorem}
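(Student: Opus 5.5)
The plan is to prove the three properties of Definition~\ref{ZKPOQ} separately. The extractable-NIZK $\Pi$ is instantiated via Theorem~\ref{cor:simext-nizk-crs}, which is where the LWE assumption enters. Factoring hardness is used for soundness: it is implicit, since the original scheme $\Sigma_{\mathsf{Factoring}}$ already relies on it.

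\textbf{Quantum completeness.} Perfect completeness of $\Pi$ gives that the honest verifier's proof $\pi$ for $(N,(p,q))$ always passes $\Pi.\sV$. The honest quantum prover therefore never aborts. It then runs $\mathsf{Q.factoring}(N)$, and the quantum completeness lemma for $\Sigma_{\mathsf{Factoring}}$ makes it accepted with probability $1-\negl(\lambda)$.

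\textbf{Classical soundness.} I argue by a hybrid reduction. Suppose a PPT prover $\mathcal{P}'$ makes the honest verifier accept with non-negligible probability $\epsilon$. Hybrid~0 is the real protocol. In Hybrid~1, the crs is produced by $\Sim_0$ and $\pi$ by $\Sim_1(\crs,\td,N)$, so $\pi$ no longer depends on $(p,q)$. By adaptive multi-theorem zero-knowledge, applied against a classical adversary that samples $N$, queries its proof oracle on $(N,(p,q))$, runs $\mathcal{P}'$ and checks whether the output factors $N$, the success probability drops by at most $\negl(\lambda)$. In Hybrid~1 we obtain a classical factoring algorithm: given a challenge $N$, run $\Sim_0,\Sim_1$ ourselves, feed $(\crs,N,\pi)$ to $\mathcal{P}'$, and output its answer. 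This succeeds with probability $\epsilon-\negl(\lambda)$, contradicting the lemma on classical soundness of $\Sigma_{\mathsf{Factoring}}$, i.e.\ classical hardness of factoring. One point needs care: the adversary's check uses $(p,q)$, but that is fine because the zero-knowledge adversary chose the statement and knows the witness. The theorem implicitly assumes the setup is trusted, so $\td$ is never given to $\mathcal{V}$.

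\textbf{Computational zero-knowledge.} This is the main obstacle, because the setup must let the simulator hold a trapdoor. Given a PPT $\mathcal{V}^*$, the simulator $\mathsf{Sim}_{\mathcal{V}^*}$ proceeds as follows.
\begin{enumerate}
\item It generates $(\crs,\td)\gets\Sim_0(1^\lambda)$ and runs $\mathcal{V}^*(\crs)$ to get $(N,\pi)$.
\item If $\Pi.\sV(\crs,N,\pi)=0$, it outputs the aborting transcript, exactly as the real prover would.
\item Otherwise it computes $w\gets\Ext(\crs,\td,N,\pi)$ and sends $\sigma=w$.
\end{enumerate}

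Indistinguishability follows by two hybrids.
\begin{itemize}
\item First, replace the real $\Setup$ crs with the $\Sim_0$ crs while the quantum prover still answers. The real prover's answer, the factorization of $N$, is unique up to ordering, so it is a deterministic function of $N$. Computational indistinguishability of the two crs distributions, which the zero-knowledge property implies with an empty oracle, then applies. The distinguisher is inefficient only in computing factors; to avoid this, I would argue that the view is a function of $(\crs,N,\pi)$ together with the factorization. I would handle this either by noting that the zero-knowledge definition allows quantum adversaries, so a QPT distinguisher may run Shor, or by appealing to extractability directly.
\item Second, by extractability against $\mathcal{V}^*$, whenever $\pi$ verifies the extracted $w$ is a valid factorization except with negligible probability. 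It then coincides with Shor's output, after fixing a canonical ordering $p<q$ in both the prover and the simulator.
\end{itemize}

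I expect the delicate step to be this crs-switch hybrid, together with the canonical-ordering subtlety.
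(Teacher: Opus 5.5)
Your proposal is correct and follows the paper's proof essentially step for step. Completeness comes from perfect completeness of $\Pi$ together with Shor's algorithm, soundness from the hybrid that switches to $(\Sim_0,\Sim_1)$ and then embeds an external factoring challenge, and zero-knowledge from a simulator that runs $\Sim_0$, obtains $(N,\pi)$ from $\mathcal{V}^*$, and answers with $\Ext(\crs,\td,N,\pi)$. Your extra care makes explicit details the paper leaves implicit: checking $\pi$ before extracting, fixing a canonical ordering of $(p,q)$, and noting that the crs-switch hybrid needs a distinguisher that runs the quantum prover. That last step uses zero-knowledge of $\Pi$ against quantum adversaries as in Definition~\ref{def:simext-nizk-crs}, which Theorem~\ref{cor:simext-nizk-crs} supplies only under quantum, not merely classical, hardness of LWE.
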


The Theorem~\ref{thm:factoring_ZKPoQ} follows from the following Lemmata~\ref{Property:QComple-Factor},~\ref{Property:Csound-Factor} and~\ref{Property:ZK-Factor}. The completeness of our factoring-based ZKPoQ scheme is stated below.

\begin{lemma} [Quantum Completeness]
\label{Property:QComple-Factor}
    The scheme~$\mathsf{ZK}.\Sigma_{\mathsf{Factoring}}$ satisfies quantum completeness of ZKPoQ in Definition~\ref{ZKPOQ}.
\end{lemma}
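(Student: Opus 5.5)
The plan is to check that, when both parties follow $\mathsf{ZK}.\Sigma_{\mathsf{Factoring}}$ honestly, the prover passes each of the two checks in the protocol except with negligible probability. A union bound then gives overall acceptance probability at least $1-\negl(\lambda)$. The two checks are the prover's verification of the NIZK proof $\pi$ in \textsc{Prove}, and the verifier's final check of $\sigma$ in \textsc{Verify}.

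\textbf{Step 1: the prover does not abort.} The honest verifier picks primes $(p,q)$, sets $N=pq$, and computes $\pi \gets \Pi.\sP(\crs, N, (p,q))$. Here $(N,(p,q))$ lies in the $\NP$ relation ``$(p,q)$ is a nontrivial factorization of $N$'', and $\crs$ comes from $\Pi.\Setup(1^\lambda)$. The perfect completeness of the extractable-NIZK (Definition~\ref{def:simext-nizk-crs}) therefore gives $\Pi.\sV(\crs, N, \pi)=1$ with probability exactly $1$. So the honest prover never outputs $\perp$ at the first step of \textsc{Prove}.

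\textbf{Step 2: the quantum answer is accepted.} Conditioned on not aborting, the prover runs $\mathsf{Q.factoring}(N)$, which is Shor's algorithm. Its input $N$ is distributed exactly as in the original scheme $\Sigma_{\mathsf{Factoring}}$, and the NIZK step has no effect on the prover's quantum computation. The interaction after the NIZK check is therefore identical to $\Sigma_{\mathsf{Factoring}}$. By the Quantum Completeness lemma for $\Sigma_{\mathsf{Factoring}}$ stated earlier, the output $\sigma=(p,q)$ passes the verifier's validity check with probability $1-\negl(\lambda)$.

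Step 1 contributes zero failure probability and Step 2 only a negligible one, so the union bound yields the required bound. No step is a real obstacle. The only subtlety is noting that the prover's factors may come out in either order, so the validity check should test $p\cdot q=N$ with $p,q$ nontrivial rather than compare the pair to the verifier's ordered $(p,q)$. Shor's algorithm's success probability can also be amplified by repetition up to $1-\negl(\lambda)$, which is what the earlier lemma already asserts.
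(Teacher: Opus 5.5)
Your proposal is correct and follows essentially the paper's route. The paper's one-line proof also combines perfect completeness of the extractable-NIZK (so the honest prover never aborts) with the quantum completeness of $\Sigma_{\mathsf{Factoring}}$. You simply spell out these two steps and the trivial union bound explicitly.
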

\begin{proof}
   Completeness follows from completeness of the Extractable-NIZK, and completeness of PoQ Scheme~$\Sigma_{\mathsf{Factoring}}$ shown in Figure~\ref{fig:PoQ_factoring}. 
 \end{proof}

The classical soundness of our factoring-based ZKPoQ scheme is given as follows. In particular, the classical soundness of our factoring-based ZKPoQ scheme is based on the zero-knowledge property of the extractable-NIZK proof as well as the classical soundness of the original factoring-based PoQ scheme. 

\begin{lemma} [Classical Soundness]
\label{Property:Csound-Factor}
    The scheme~$\mathsf{ZK}.\Sigma_{\mathsf{Factoring}}$ satisfies the classical soundness of ZKPoQ in Definition~\ref{ZKPOQ}.
\end{lemma}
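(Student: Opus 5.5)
We prove classical soundness by a reduction to the classical hardness of factoring, i.e.\ to the classical soundness of $\Sigma_{\mathsf{Factoring}}$, with a hybrid argument that removes the verifier's NIZK proof $\pi$ using the zero-knowledge property of the extractable-NIZK $\Pi$. Fix a PPT prover $\mathcal{P}'$ and suppose it makes the honest verifier of $\mathsf{ZK}.\Sigma_{\mathsf{Factoring}}$ accept with probability $\varepsilon(\lambda)$. Accepting means $\mathcal{P}'$ outputs $\sigma=(p,q)$ with $pq=N$ on input $(\crs,N,\pi)$.

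\textbf{Hybrid 0} is the real experiment: $\crs\gets\Pi.\Setup(1^\lambda)$, the verifier samples $(p,q)$, sets $N=pq$, computes $\pi\gets\Pi.\sP(\crs,N,(p,q))$, and $\mathcal{P}'$ receives $(\crs,N,\pi)$.

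\textbf{Hybrid 1} differs only in the CRS and proof: $(\crs,\td)\gets\Sim_0(1^\lambda)$ and $\pi\gets\Sim_1(\crs,\td,N)$. The verifier can still check $\sigma$, since it knows $(p,q)$.

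We bound the gap between the two hybrids by adaptive multi-theorem zero-knowledge. Build a distinguisher $\cA$ that, on input $\crs$, samples $(p,q)$ itself, queries its oracle once on $(N,(p,q))$ to get $\pi$, runs $\mathcal{P}'(\crs,N,\pi)$, and outputs $1$ iff the returned factors multiply to $N$. With the real prover oracle this is Hybrid 0; with the simulator oracle (which ignores the witness) it is Hybrid 1. Since $\cA$ is PPT and the ZK property holds even against quantum $\cA$, hence against classical ones, the success probabilities differ by at most $\mathsf{adv}_{\mathsf{zk}}=\negl(\lambda)$.

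In Hybrid 1 we build a classical factoring algorithm $\mathcal{B}$. On a challenge $N$ distributed as in the scheme, $\mathcal{B}$ runs $(\crs,\td)\gets\Sim_0(1^\lambda)$ and $\pi\gets\Sim_1(\crs,\td,N)$, feeds $(\crs,N,\pi)$ to $\mathcal{P}'$, and outputs its answer. This simulation needs no witness, so $\mathcal{B}$ is PPT and perfectly reproduces Hybrid 1. By classical soundness of $\Sigma_{\mathsf{Factoring}}$ (classical hardness of factoring), $\mathcal{B}$ succeeds with negligible probability. Hence $\varepsilon\le\negl(\lambda)$.

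The step needing most care is the ZK hybrid. First, $\Sim_0,\Sim_1$ must be efficient classical algorithms, so that $\mathcal{B}$ is a genuine PPT factoring algorithm. Second, the simulated proof is generated for the true statement $N$ without the witness, and the oracle formalism must allow $\cA$ to supply the witness itself. One might also worry that $\crs$ is generated by a common \textsc{Setup} that could be adversarial, but in the soundness experiment the CRS is honestly generated, so the reduction is valid. No extractability is needed here; it is used only for zero-knowledge.
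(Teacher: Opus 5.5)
Your proof is correct and takes the same approach as the paper: you switch to the simulated CRS and proof $(\Sim_0,\Sim_1)$ via adaptive multi-theorem zero-knowledge, then embed a fresh factoring challenge $N$ into the simulated game to reduce to the classical soundness of $\Sigma_{\mathsf{Factoring}}$, folding the paper's $\mathsf{Game}_1 \equiv \mathsf{Game}_2$ step directly into your reduction $\mathcal{B}$. You are more explicit than the paper in two places: you spell out the ZK distinguisher, and you note that $\Sim_0,\Sim_1$ must be classical PPT for $\mathcal{B}$ to contradict the classical hardness of factoring.
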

\begin{proof}
     We proceed by contradiction. Assuming that there exists a PPT adversary, denoted as~\( \mathcal{A} \), who can break the classical soundness of scheme~$\mathsf{ZK}.\Sigma_{\mathsf{Factoring}}$ with a non-negligible advantage~\( \varepsilon_0 \).
     We define the following three games. Let~$\mathsf{adv}_i(\mathcal{A})$ denote the advantage of the adversary~$\mathcal{A}$ in the~$\mathsf{Game}_i$ for~$i=\{0,1,2\}$, respectively.
        
\vspace{1mm}

\begin{itemize}
    \item $\underline{\mathsf{Game}_0}$: Let~$\mathsf{Game}_0$ be the same as the real scheme~$\mathsf{ZK}.\Sigma_{\mathsf{Factoring}}$ in Figure~\ref{fig:ZKPOQ-Fact}.

    \item $\underline{\mathsf{Game}_1}$: Let~$\mathsf{Game}_1$ be the same as~$\mathsf{Game}_0$, except the generation of~$(\crs, \td) \gets \textsc{Setup}(1^\lambda)$ is replaced by~$(\crs, \td) \gets \Sim_0(1^\lambda)$, and the proof~$\pi$ generated by the classical verifier~$\mathcal{V}$ is also replaced by the simulated one~$\pi \gets \Sim_1(\crs, \td, N)$ correspondingly,
    where~$\Sim = (\Sim_0, \Sim_1)$ is the simulator corresponding to the adaptive multi-theorem computational zero-knowledge property. 

    \item $\underline{\mathsf{Game}_2}$: Let~$\mathsf{Game}_2$ be the same as~$\mathsf{Game}_1$, except that the~$N$ is replaced by the~$N'$ generated by the challenger for the soundness of the scheme~$\Sigma_{\mathsf{Factoring}}$. We will also forward the response~$(p,q)$ from the prover to the challenger.

\end{itemize}

\noindent$\underline{\mathsf{Game}_0 \approx_c \mathsf{Game}_1}$: This follows directly from the property of the adaptive multi-theorem computational zero-knowledge (Definition~\ref{def:simext-nizk-crs}). Thus,~$\mathsf{adv}_0(\mathcal{A})\leq \mathsf{adv}_1(\mathcal{A})+\mathsf{adv}_{\mathsf{zk}}\leq \mathsf{adv}_1(\mathcal{A})+\negl(\lambda)$.

\vspace{2mm}
\noindent$\underline{\mathsf{Game}_1 \equiv \mathsf{Game}_2}$:  The~$\mathsf{Game}_1$ and~$\mathsf{Game}_2$ are identical as the distributions of~$N$ in the ZKPoQ scheme and~$N'$ from the (honest) challenger for the original PoQ scheme are the same. Thus, we have~$\mathsf{adv}_2(\mathcal{A})=\mathsf{adv}_1(\mathcal{A})$.

First, we claim that~$\mathsf{adv}_2(\mathcal{A}) \leq \varepsilon'$, where the latter one denotes the advantage of any classical PPT adversary~$\mathcal{A}'$ against the soundness of the original PoQ scheme~$\Sigma_{\mathsf{Factoring}}$.
This is because a correct response~$(p,q)$ (as factorization of~$N'$) the adversary~$\mathcal{A}$ in~$\mathsf{Game}_2$ will also be the correct solution to the challenge provided by the challenger for the soundness of the original PoQ scheme. Thus, we have~$\varepsilon_0 = \mathsf{adv}_0(\mathcal{A}) \leq \mathsf{adv}_1(\mathcal{A})+\negl({\lambda})=\mathsf{adv}_2(\mathcal{A})+\negl({\lambda}) \leq \varepsilon'+\negl({\lambda})$. Based on the soundness of the original PoQ scheme, we have~$\varepsilon' \leq \negl(\lambda)$ and therefore~$\varepsilon_0 \le \negl(\lambda)$. The lemma can be concluded by contradiction.
 \end{proof}

Next, we show that our factoring-based ZKPoQ scheme also enjoys the zero-knowledge property. Note that in this security notion, we will consider a malicious verifier who will try to extract unexpected information from the prover. Notably, we can not assume the number~$N$ sent by the malicious verifier will be honestly chosen, which makes the construction of the simulator for zero-knowledge slightly more involved. As mentioned in Section~\ref{sec:contribution}, the zero-knowledge property of our factoring-based ZKPoQ scheme is based on the extractability of the extractable-NIZK proof.

\begin{lemma} [Computational Zero-Knowledge]
\label{Property:ZK-Factor}
    The scheme~$\mathsf{ZK}.\Sigma_{\mathsf{Factoring}}$ satisfies computational zero-knowledge of ZKPoQ in Definition~\ref{ZKPOQ}. 
\end{lemma}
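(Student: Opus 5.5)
We build a classical simulator $\mathsf{Sim}_{\mathcal{V}^*}$ that replaces the quantum prover's use of Shor's algorithm with extraction from the verifier's own NIZK proof, and then argue indistinguishability through a short hybrid sequence.

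\textbf{The simulator.} $\mathsf{Sim}_{\mathcal{V}^*}$ plays the role of \textsc{Setup} itself and works as follows.
\begin{enumerate}
\item Sample $(\crs,\td)\gets\Sim_0(1^\lambda)$, using the simulator of the adaptive multi-theorem zero-knowledge property, so that it holds an extraction trapdoor.
\item Run $\mathcal{V}^*(\crs)$ as an oracle to obtain $(N,\pi)$.
\item Run $\Pi.\sV(\crs,N,\pi)$. If verification fails, output the transcript $(\crs,N,\pi,\perp)$, exactly as the honest prover would abort.
\item Otherwise compute $w\gets\Ext(\crs,\td,N,\pi)$. If $w=(p,q)$ is a valid factorization of $N$, set $\sigma=(p,q)$; otherwise output a failure symbol.
\item Output $(\crs,N,\pi,\sigma)$, together with $\mathcal{V}^*$'s randomness so that the full view is reproduced.
\end{enumerate}
Every step is classical and polynomial time.

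\textbf{Hybrid 0 to Hybrid 1.} Hybrid 0 is the real view: $\crs\gets\Pi.\Setup$, and the prover runs Shor's algorithm. Hybrid 1 is identical except that $\crs$ is sampled by $\Sim_0$.

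We bound the distance by a reduction to the zero-knowledge property in which the adversary makes no oracle queries. The reduction receives $\crs$ and runs $\mathcal{V}^*$. It then completes the prover's response inefficiently, by factoring $N$. This is allowed because $\mathcal{A}$ in the definition is a quantum polynomial-size algorithm, so it may run Shor's algorithm. Finally it runs the distinguisher.

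One subtlety is that the definition compares $\Pi.\Setup$ with the first output of $\Sim_0$. We need the marginal distribution of $\crs$ under $\Sim_0$ to be indistinguishable from $\Setup$. This follows from the zero-knowledge definition with zero queries.

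\textbf{Hybrid 1 to Hybrid 2.} Hybrid 2 replaces Shor's output by the extracted witness, which gives exactly the simulator's output. The two hybrids differ only on the event $\Pi.\sV(\crs,N,\pi)=1 \wedge (N,w)\notin\cR$.

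The extractability property bounds the probability of this event by $\negl(\lambda)$, taking $\mathcal{V}^*$ itself as $\cA(\crs)$. When $N$ is a product of two primes, its factorization is unique, so the extracted $(p,q)$ coincides with what $\mathsf{Q.factoring}(N)$ returns, up to ordering, which we fix canonically. The two hybrids are then identical except for the negligible failure of Shor's algorithm.

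\textbf{Main obstacle.} The hardest part is dealing with a malicious $\mathcal{V}^*$ that sends an $N$ not of the intended form, for example one with many prime factors. The relation $\cR$ must therefore be fixed as
\[
\{(N,(p,q)) : N=pq,\ p,q \text{ prime}\}.
\]
Under this relation, a verifying proof forces $N$ into this form except with negligible probability, which makes the prover's output deterministic and matched by $w$.

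A second, minor issue is the mismatch between $\Setup$ and $\Sim_0$ in the real protocol. This is handled by the zero-query use of the zero-knowledge property in Hybrid 0 to Hybrid 1. Combining the three hybrids gives $\mathsf{Sim}_{\mathcal{V}^*}(1^\lambda)\approx_c \mathsf{View}[\langle\mathcal{V}^*\leftrightarrow\mathcal{P}\rangle]$.
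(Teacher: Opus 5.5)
Your proposal is correct and takes the same route as the paper. The simulator samples $(\crs,\td)$ with the NIZK simulator, runs $\mathcal{V}^*$, extracts $(p,q)$ with $\Ext$ and feeds it back; CRS indistinguishability comes from the zero-knowledge property, and the negligible failure probability comes from extractability. Your version is more careful than the paper's in several places: it fixes the order of the hybrids, matches the honest prover's abort on an invalid $\pi$, uses a quantum reduction running Shor for the CRS switch, and requires $p,q$ prime in the relation. These are refinements of the same argument, not a different approach.
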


 \begin{proof} In the following, we provide the construction of the PPT simulator~\( \mathsf{Sim}_{\mathcal{V^*}} \) depending solely on the classical verifier~$\mathcal{V^*}$.
 
\vspace{1mm}
 \noindent  \( \mathsf{Sim}_{\mathcal{V^*}} (1^\lambda) \):
 \begin{itemize}
     \item Run~$(\crs, \td) \gets \Sim(1^\lambda)$ and store the~$\td$.

     \item Run verifier~${\mathcal{V^*}}$ on~$\crs$ to get~$N$ and extractable-NIZK proof~$\pi$ for proving the NP relation~$(N,(p,q))$.

     \item Run polynomial-time (classical) extractor~$\Ext$ to compute~$w \gets \Ext(\crs, \td, x, \pi)$, where the witness~$w = (p,q)$ and~$x = N$. If this fails, then we abort.

     \item  Feed the verifier~${\mathcal{V^*}}$ with~$(p, q)$ and the simulation is finished.
 \end{itemize}

 First, the distributions~$\crs$ from the setup algorithm~$\Setup(1^\lambda)$ and the simulation algorithm~$\Sim(1^\lambda)$ are computationally indistinguishable (that is implicit in the zero-knowledge property of the extractable-NIZK proof).
 In addition, since the probability of extractable-NIZK's extractor~$\Ext$ failing to extract~$(p, q)$ is negligible, the probability of simulator~\( \mathsf{Sim}_{\mathcal{V^*}} \) abort is also negligible. 
 Thus, we have~$
   \mathsf{Sim}_{\mathcal{V^*}}(1^\lambda) \approx_c \mathsf{View}[\langle \mathcal{V^*}(1^\lambda) \leftrightarrow \mathcal{P}(1^\lambda) \rangle].$
  \end{proof}

\subsection{Construction of the LWE-based ZKPoQ Scheme}

\begin{figure}[htbp]
\begin{framed}
\centering
\begin{minipage}{1.0\textwidth}
\begin{center}
    \underline{$\mathsf{ZK}.\Sigma_{\mathsf{BCMVV}}$: ZKPoQ for LWE-based Scheme (Parallel repetition version)}
\end{center}

\vspace{1mm}
Fix a security parameter~$\lambda$ and an NTCF family~$\mathcal{F} \,=\, \big\{f'_{k,b} : \X\rightarrow \mathcal{D}_{\Y} \big\}_{k\in \mathcal{K},b\in\{0,1\}}$ described by algorithms~$(\textsc{Gen}_{\mathcal{F}},\textsc{Samp}_{\mathcal{F}},\textsc{Inv}_{\mathcal{F}},\textsc{Chk}_{\mathcal{F}})$, assuming the post-quantum hardness of LWE.
Let~\( \mathcal{P} \) denote a quantum prover and~\( \mathcal{V} \) denote a classical verifier. Let~$\langle \mathcal{P},\mathcal{V}\rangle$ be a QPIP system for ZKPoQ scheme described by algorithms~$(\textsc{Setup}, \textsc{Nizk}, \textsc{Prove}, \textsc{Verify})$.
Let~$\Pi = (\Setup, \sP, \sV)$ be an extractable non-interactive, adaptively multi-theorem computationally zero-knowledge (extractable-NIZK) scheme for factoring~$N$. Repeat the following steps~$\lambda$ times:

\noindent {\underline{\textsc{Setup}}$(1^\lambda)$}:~$(\crs, \td) \gets \Pi.\Setup(1^\lambda)$.

\noindent {$\mathcal{V}$: \underline{\textsc{Nizk}}$(\crs)$}:
\begin{itemize}
    \item Prepare~$(k,t_k)\gets \textsc{Gen}_{\mathcal{F}}(1^\lambda)$, where~$k=(\mathbf{A},\mathbf{A}\mathbf{s}+\mathbf{e})$ and~$t_k=\mathsf{td}_{\mathbf{A}}$.
    \item Recover LWE secret~$\mathbf{s}$ from~$k$ via~$t_k$ using algorithm from Lemma~\ref{thm:trapdoor_samp}.
    \item Compute proof~$\pi \gets \Pi.\sP(\crs, (\mathbf{A},\mathbf{A}\mathbf{s}+\mathbf{e}), \mathbf{s})$.

    \item Send~$(k,\pi)$ to the prover~$\mathcal{P}$.
\end{itemize}

\noindent {$\mathcal{P}$: \underline{\textsc{Prove$_1$}}$(\crs, k, \pi)$}:
\begin{itemize}
    \item Compute~$\Pi.\sV(\crs, k, \pi)$ and continue if it passes; else output~$\perp$ and abort. 
    \item Run~$\textsc{Samp}_{\mathcal{F}}(1^\lambda)$ and measure the image register to yield an string~$\textbf{y}$, send~$\textbf{y}$ to the verifier~$\mathcal{V}$. Note that the~$\textbf{y} = f'_{k,b}(\textbf{x})$ is distributed over random~$b\xleftarrow{\$}\{0,1\}$ and~$\mathbf{x}\xleftarrow{\$}\X$.
    
\end{itemize}

\noindent {$\mathcal{V}$: Sample a uniformly random challenge bit~$c \xleftarrow{\$} \{0,1\}$} and send~$c$ to the prover~$\mathcal{P}$

\noindent {$\mathcal{P}$: \underline{\textsc{Prove$_2$}}$(c,\rho)$}:
\begin{itemize}
    \item Preimage test (if~$c=0$): Perform a standard basis measurement, return a pair~$(b,\mathbf{x}) \in \{0,1\}\times\{0,1\}^n$ as the proof~$\sigma_0=(b,\mathbf{x})$. In this case, the response of the prover is a random one of the two preimages~$(b=0,\textbf{x}_0=\textbf{x})$ and~$(b=1,\textbf{x}_1=\textbf{x}_0-\textbf{s})$ of~$\textbf{y}$ under the function~$f_{k,b}'(\textbf{x}_b)$.
    \item Equation test (if~$c=1$): Perform a Hadamard basis measurement, return a pair~$(u,\mathbf{d})\in \{0,1\}\times\{0,1\}^n$ as the proof~$\sigma_1=(u,\mathbf{d})$. In this case, the response of the prover is~$(u,\mathbf{d})$ such that~$\mathbf{d}$ is random and~$u = \mathbf{d}^\top\cdot (\mathbf{x}_0 \oplus \mathbf{x}_1) \bmod 2$.
    \item Send~$\sigma_c$ to the verifier~$\mathcal{V}$.
    
\end{itemize}

\noindent {$\mathcal{V}$: \underline{\textsc{Verify}}$(t_k,\textbf{y},c,\sigma_c)$}:
\begin{itemize}
    \item Compute~$(\mathbf{x}_0,\mathbf{x}_1)\gets \textsc{Inv}_{\mathcal{F}}(1^\lambda,t_k,\textbf{y})$, where~$\mathbf{x}_1=\mathbf{x}_0 - \mathbf{s} \mod q$. 
    \item Check the validity of~$\sigma_c$ and Output~$\langle \mathcal{P},\mathcal{V}\rangle$, which is defined as 
    \begin{align*} 
        \langle \mathcal{P},\mathcal{V}\rangle:= 
\begin{cases} 
1 & \text{if } c = 0 \text{ and } \textsc{Chk}_{\mathcal{F}}(k, \textbf{y}, b, \mathbf{x}) = 1, \\
1 & \text{if } c = 1, d\in G_{k,0,x_0}\cap G_{k,1,x_1}  \text{ and } \mathbf{d}^\top\cdot (\mathbf{x}_0 \oplus \mathbf{x}_1) \bmod 2 = u, \\
0 & \text{otherwise}.
\end{cases}
    \end{align*}

\end{itemize}
\noindent At the end of the~$\lambda$ rounds, if the verifier~$\mathcal{V}$ has not output~$0$ it accepts.
\end{minipage}
\end{framed}
\caption{$\mathsf{ZK}.\Sigma_{\mathsf{BCMVV}}$: LWE-based Zero-Knowledge Proofs of Quantumness Scheme.}
\label{fig:ZKPoQ-BCMVV}
\end{figure}

We refer to Figure~\ref{fig:ZKPoQ-BCMVV} for our LWE-based ZKPoQ scheme.

\begin{theorem}\label{thm:LWE_ZKPoQ}
    Assuming the post-quantum LWE problem, the protocol~$\mathsf{ZK}.\Sigma_{\mathsf{BCMVV}}$ described in Figure~\ref{fig:ZKPoQ-BCMVV} is a ZKPoQ  (Definition~\ref{ZKPOQ}) scheme satisfying quantum completeness, classical soundness, and computational zero-knowledge.
\end{theorem}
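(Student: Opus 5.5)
We will follow exactly the structure used for Theorem~\ref{thm:factoring_ZKPoQ}, splitting Theorem~\ref{thm:LWE_ZKPoQ} into three lemmata: quantum completeness, classical soundness, and computational zero-knowledge of $\mathsf{ZK}.\Sigma_{\mathsf{BCMVV}}$.

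\textbf{Completeness.} In each of the $\lambda$ rounds, the honest verifier produces an NTCF key $k=(\mathbf{A},\mathbf{A}\mathbf{s}+\mathbf{e})$ together with a valid witness $\mathbf{s}$. The secret $\mathbf{s}$ is recovered by $\mathsf{Invert}$ from Theorem~\ref{thm:trapdoor_samp}, which succeeds with overwhelming probability. Perfect completeness of $\Pi$ then guarantees that the prover's check $\Pi.\sV(\crs,k,\pi)=1$ always passes. After that check, the interaction is literally $\Sigma_{\mathsf{BCMVV}}$, so the quantum completeness lemma of~\cite{brakerski2021cryptographic} applies. A union bound over the $\lambda$ rounds and the negligible inversion failure gives acceptance probability $1-\negl(\lambda)$.

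\textbf{Soundness.} We use a hybrid argument as in Lemma~\ref{Property:Csound-Factor}.
\begin{itemize}
\item $\mathsf{Game}_0$ is the real protocol.
\item $\mathsf{Game}_1$ replaces $\Setup$ by $\Sim_0$ and every verifier proof by $\Sim_1(\crs,\td,k_i)$. Here we use the \emph{multi-theorem} zero-knowledge property, since there are $\lambda$ proofs on one $\crs$. The reduction answers all $\lambda$ proof queries through its oracle. It also plays the verifier itself, which is possible because it generates $(k_i,t_{k_i})$ and can therefore run \textsc{Verify}. The advantage therefore changes by at most $\mathsf{adv}_{\mathsf{zk}}$.
\item $\mathsf{Game}_2$ takes the keys $k_i$ from the soundness challenger of the original $\Sigma_{\mathsf{BCMVV}}$. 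It forwards $\mathbf{y}_i$ and $\sigma_{c_i}$ to that challenger and relays its challenges $c_i$. Since $\Sim_1$ needs only $k_i$ and not $t_{k_i}$ or $\mathbf{s}_i$, this game is identically distributed to $\mathsf{Game}_1$.
\end{itemize}
A classical prover winning $\mathsf{Game}_2$ thus yields a classical prover breaking the classical soundness lemma of~\cite{brakerski2021cryptographic}. Hence the advantage is negligible.

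A subtlety is that classical soundness of $\Sigma_{\mathsf{BCMVV}}$ rests on the adaptive hardcore bit, i.e.\ on LWE. The simulated NIZK must not help, and it does not, because it depends only on public data. The NIZK itself is instantiated from LWE by Theorem~\ref{cor:simext-nizk-crs}.

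\textbf{Zero-knowledge.} This is the main obstacle: unlike the factoring case, the prover's responses are random quantum measurement outcomes, not a deterministic answer. The simulator $\mathsf{Sim}_{\mathcal{V}^*}$ proceeds as follows.
\begin{enumerate}
\item It runs $(\crs,\td)\gets\Sim_0$ and feeds $\crs$ to $\mathcal{V}^*$.
\item In each round it receives $(k,\pi)$. If $\Pi.\sV$ rejects, it aborts exactly as the honest prover does.
\item Otherwise it extracts $\mathbf{s}\gets\Ext(\crs,\td,k,\pi)$; by extractability, $(k,\mathbf{s})\in\cR$ except with negligible probability.
\item Knowing $\mathbf{s}$, it samples the prover's messages classically. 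It draws $b\xleftarrow{\$}\{0,1\}$ and $\mathbf{x}\xleftarrow{\$}\X$ and sets $\mathbf{y}\gets f'_{k,b}(\mathbf{x})$, which is classically sampleable. It computes the two preimages $\mathbf{x}_0$ and $\mathbf{x}_1=\mathbf{x}_0-\mathbf{s}$.
\item On $c=0$ it outputs $(b,\mathbf{x}_b)$. On $c=1$ it outputs a uniform $\mathbf{d}$ and $u=\mathbf{d}^\top(\mathcal{J}(\mathbf{x}_0)\oplus\mathcal{J}(\mathbf{x}_1))$.
\end{enumerate}

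The key step is arguing that these classical samples match the distribution of the honest prover's measurements. We would invoke the efficient range superposition property, whose Hellinger bound makes $\mathbf{y}$ negligibly close to the measured image register. The standard-basis measurement of $\frac{1}{\sqrt2}(\ket{0,\mathbf{x}_0}+\ket{1,\mathbf{x}_1})$ gives a uniform preimage. The Hadamard measurement gives a uniform $\mathbf{d}$ with $u$ equal to the inner product up to the $\mathcal{J}$ encoding, following the analysis in~\cite{brakerski2021cryptographic}.

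One concern is a malicious key $k$ that is not of the honest form, for which the prover's state need not be a clean claw state. Extractability only certifies $k\in\cL$, so we would argue that any valid LWE-form key with a short $\mathbf{e}$ still yields the same claw structure. Combining this with indistinguishability of $\Sim_0$'s $\crs$ from $\Setup$'s, a hybrid over the $\lambda$ rounds concludes computational zero-knowledge.
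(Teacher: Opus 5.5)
Your decomposition, soundness hybrids and simulator are the paper's. You are slightly more careful in places: you abort exactly when $\Pi.\sV$ rejects, and you pair $\mathbf{x}_0,\mathbf{x}_1$ correctly when computing $u$. The gap is the step you flag at the end. The claim that every key in the language ``still yields the same claw structure'' is false, and the $c=1$ branch of your simulation depends on it. The perfect matching $\R_k$, the two-element preimage sets and the near-balanced amplitudes hold only for well-formed keys, and they depend on $\mathbf{A}$. One needs $\mathbf{x}\mapsto\mathbf{A}\mathbf{x}+\mathbf{e}'$ to be injective over the noise range ($\mathbf{A}$ injective mod $q$, and $\Lambda_q(\mathbf{A})$ free of short nonzero vectors), and $\lVert\mathbf{e}\rVert$ small relative to the Gaussian width. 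The relation ``$(\mathbf{A},\mathbf{A}\mathbf{s}+\mathbf{e})$ with witness $\mathbf{s}$'' certifies nothing about $\mathbf{A}$.

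Concretely, $\mathcal{V}^*$ may send $k=(\mathbf{0},\mathbf{e})$ with $\mathbf{e}$ short, plus an honestly generated proof using witness $\mathbf{s}=\mathbf{0}$. The honest prover accepts, and after measuring $\mathbf{y}$ it holds $(\alpha\ket{0}+\beta\ket{1})\otimes|\X|^{-1/2}\sum_{\mathbf{x}\in\X}\ket{\mathcal{J}(\mathbf{x})}$. Each $\lceil\log q\rceil$-bit block of its Hadamard outcome $\mathbf{d}$ is then $0$ with probability $q/2^{\lceil\log q\rceil}\ge 1/2$. Your simulator outputs a uniform $\mathbf{d}$, so testing a single block distinguishes the two. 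By contrast, $\mathbf{y}$ and the $c=0$ answer are distributed exactly as in the real protocol for every $k$, since they are standard-basis measurements of the $\sampf$ state. The Hellinger bound is needed for the $c=1$ branch, not for $\mathbf{y}$.

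The paper's proof has the same hole: it asserts that $(u,\mathbf{d})$ ``has the same distribution as the real one'' without restricting $k$, so following it does not fill the gap. One fix is to have the verifier's NIZK certify that the key is well-formed:
\begin{itemize}
\item check injectivity of $\mathbf{A}$ mod $q$ and the bound on $\lVert\mathbf{e}\rVert$ as part of the relation;
\item put a lattice trapdoor for $\mathbf{A}$ in the witness, which certifies that $\Lambda_q(\mathbf{A})$ has no short nonzero vectors.
\end{itemize}
Then every key accepted by $\Pi.\sV$ has the injective-pair and range-superposition properties, and your $c=1$ analysis goes through up to negligible error. The rest of your argument is unaffected: completeness, the soundness hybrids (where $\Sim_1$ still needs only $k$), and the round-by-round hybrid. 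In that hybrid, post-quantum extractability lets you fold the honest prover of earlier rounds into the extraction adversary.
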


The Theorem~\ref{thm:LWE_ZKPoQ} follows from the Lemmata~\ref{Property:QComple-BCMVV},~\ref{Property:Csound-BCMVV} and~\ref{Property:ZK-BCMVV}. The completeness is given as follows.

\begin{lemma} [Quantum Completeness]
\label{Property:QComple-BCMVV}
    The scheme~$\mathsf{ZK}.\Sigma_{\mathsf{BCMVV}}$ satisfies quantum completeness of ZKPoQ in Definition~\ref{ZKPOQ}.
\end{lemma}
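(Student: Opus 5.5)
The plan is to reduce quantum completeness of $\mathsf{ZK}.\Sigma_{\mathsf{BCMVV}}$ to two earlier facts. The first is the perfect completeness of the extractable-NIZK $\Pi$ (Definition~\ref{def:simext-nizk-crs}). The second is the quantum completeness of the original scheme $\Sigma_{\mathsf{BCMVV}}$ proven in~\cite{brakerski2021cryptographic}. I would fix an honest verifier $\mathcal{V}$ and the honest QPT prover $\mathcal{P}$ and analyze a single round of the $\lambda$-fold repetition. I would then conclude with a union bound over the $\lambda$ rounds, so that the total failure probability is at most $\lambda\cdot\negl(\lambda)=\negl(\lambda)$.

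\textbf{Step 1: the NIZK check is passed.} The honest verifier samples $(k,t_k)\gets\genf(1^\lambda)$ with $k=(\mathbf{A},\mathbf{A}\mathbf{s}+\mathbf{e})$. It then recovers $\mathbf{s}$ using $\mathsf{Invert}$ from Theorem~\ref{thm:trapdoor_samp}, which succeeds with overwhelming probability over the key generation. On that event $(k,\mathbf{s})$ is a valid instance-witness pair for the NP relation $\{((\mathbf{A},\mathbf{b}),\mathbf{s}) : \mathbf{b}-\mathbf{A}\mathbf{s} \text{ is short}\}$. By perfect completeness of $\Pi$, the proof $\pi$ is accepted by $\Pi.\sV(\crs,k,\pi)$ with probability $1$, so $\mathcal{P}$ does not abort in \textsc{Prove}$_1$ except with negligible probability.

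\textbf{Step 2: the rest of the round is identical to $\Sigma_{\mathsf{BCMVV}}$.} Conditioned on not aborting, the prover's state is unaffected by $\crs$ and $\pi$, since it only verified a classical string. The remaining interaction consists of \textsc{Samp}, the challenge $c$, \textsc{Prove}$_2$ and \textsc{Verify}. Its joint distribution is exactly that of one round of $\Sigma_{\mathsf{BCMVV}}$ on a key $k$ distributed as $\genf(1^\lambda)$. Hence the acceptance probability of this part equals that of the original scheme, which is $1-\negl(\lambda)$ by the quantum completeness lemma of~\cite{brakerski2021cryptographic}.

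The main obstacle I expect is a bookkeeping point in Step~1. The NIZK relation must be defined so that the honestly generated LWE key always has a witness that the verifier actually obtains. This requires the noise bound in Theorem~\ref{thm:trapdoor_samp} to hold for the NTCF parameters, and any failure of $\mathsf{Invert}$ must be charged to the negligible term. If the original completeness lemma is stated for the whole repeated protocol rather than per round, I would instead apply it directly to the full protocol, and add the $\lambda$ NIZK-failure events via the union bound.
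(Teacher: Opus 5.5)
Your proposal is correct and follows the paper's route. The paper's proof is a one-line combination of the extractable-NIZK's completeness with the quantum completeness of $\Sigma_{\mathsf{BCMVV}}$. Your explicit treatment of the negligible failure of $\mathsf{Invert}$ and the union bound over the $\lambda$ rounds just spells out details the paper leaves implicit.
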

\begin{proof}
   Completeness follows from completeness of the extractable-NIZK, and completeness of PoQ scheme~$\Sigma_{\mathsf{BCMVV}}$ shown in Figure~\ref{fig:BCMVV}. 
 \end{proof}

The classical soundness of our LWE-based ZKPoQ scheme is given as follows. To prove the classical soundness of the LWE-based ZKPoQ scheme, we need to rely on both the zero-knowledge property of the extractable-NIZK proof as well as the classical soundness of the original LWE-based PoQ scheme.

\begin{lemma} [Classical Soundness]
\label{Property:Csound-BCMVV}
    The scheme~$\mathsf{ZK}.\Sigma_{\mathsf{BCMVV}}$ satisfies the classical soundness of ZKPoQ in Definition~\ref{ZKPOQ}.
\end{lemma}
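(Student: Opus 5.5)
We argue by contradiction, using the same hybrid template as in the proof of Lemma~\ref{Property:Csound-Factor}. Suppose a PPT prover $\mathcal{A}$ makes the verifier of $\mathsf{ZK}.\Sigma_{\mathsf{BCMVV}}$ accept with non-negligible probability $\varepsilon_0$. We will turn $\mathcal{A}$ into a PPT prover $\mathcal{A}'$ against the classical soundness of the original scheme $\Sigma_{\mathsf{BCMVV}}$ (Figure~\ref{fig:BCMVV}), which only accepts with negligible probability.

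The hybrids are as follows. $\mathsf{Game}_0$ is the real protocol of Figure~\ref{fig:ZKPoQ-BCMVV}. In every round the honest verifier generates $(k,t_k)\gets\genf(1^\lambda)$, recovers $\mathbf{s}$ via Theorem~\ref{thm:trapdoor_samp}, and sends $(k,\pi)$ with $\pi\gets\Pi.\sP(\crs,k,\mathbf{s})$.

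In $\mathsf{Game}_1$ we replace $\Setup$ by $(\crs,\td)\gets\Sim_0(1^\lambda)$. Every proof $\pi$ in each of the $\lambda$ rounds becomes $\Sim_1(\crs,\td,k)$. This is where the \emph{adaptive multi-theorem} zero-knowledge property is needed. A distinguisher $\mathcal{B}$ runs the whole interaction with $\mathcal{A}$. It generates each $(k,t_k)$ itself and queries its proof oracle on $(k,\mathbf{s})$. It keeps $t_k$ so it can run $\invf$ and evaluate the acceptance predicate, and it outputs whether the verifier accepts. Hence $|\mathsf{adv}_0(\mathcal{A})-\mathsf{adv}_1(\mathcal{A})|\le\mathsf{adv}_{\mathsf{zk}}\le\negl(\lambda)$. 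The definition only requires $\mathcal{B}$ to be polynomial-size, and a classical $\mathcal{B}$ suffices here.

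In $\mathsf{Game}_2$ the keys $k$ are no longer generated by us but are the keys sent by the soundness challenger of $\Sigma_{\mathsf{BCMVV}}$. Each $\mathbf{y}$, $c$ and $\sigma_c$ is forwarded between $\mathcal{A}$ and that challenger. The simulated proof $\Sim_1(\crs,\td,k)$ needs only the statement $k$, not the witness $\mathbf{s}$ or the trapdoor $t_k$. The key distribution is identical to the honest one, so $\mathsf{Game}_1\equiv\mathsf{Game}_2$.

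The reduction $\mathcal{A}'$ then simulates $\mathsf{Game}_2$. It samples $(\crs,\td)\gets\Sim_0$ and, in each round, forwards $k$ together with a simulated $\pi$ to $\mathcal{A}$. It relays $\mathbf{y}$ to the challenger, relays the challenge bit $c$ back, and forwards $\sigma_c$. Acceptance in $\mathsf{Game}_2$ coincides with acceptance by the external verifier, since the decision is computed by the challenger with the same predicate. So $\varepsilon_0\le\varepsilon'+\negl(\lambda)$, where $\varepsilon'$ is $\mathcal{A}'$'s success against $\Sigma_{\mathsf{BCMVV}}$. Classical soundness of $\Sigma_{\mathsf{BCMVV}}$ gives $\varepsilon'\le\negl(\lambda)$, which is the contradiction.

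The main obstacle is the $\mathsf{Game}_0\to\mathsf{Game}_1$ step. The proof statement here is an LWE instance that depends on the verifier's trapdoor, and there are $\lambda$ repetitions. We must check that the distinguisher can run the verifier's decision procedure using only its own $t_k$ and never needs $\td$. We must also check that a single $\crs$ is shared across rounds, so the multi-theorem version of zero-knowledge (rather than a single-proof version) is required. If instead the figure is read as running $\textsc{Setup}$ afresh in every round, then we would do a standard hybrid over the $\lambda$ rounds, losing a factor $\lambda$ in the bound, which is still negligible.
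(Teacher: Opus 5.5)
Your proposal is correct and follows essentially the same route as the paper. It uses the same three hybrids: the real protocol; then a simulated CRS and simulated proofs via $\Sim=(\Sim_0,\Sim_1)$, justified by adaptive multi-theorem zero-knowledge; then embedding the $\Sigma_{\mathsf{BCMVV}}$ soundness challenger's $k$ and $c$ while forwarding $\mathbf{y}$ and $\sigma_c$, concluding $\varepsilon_0\le\varepsilon'+\negl(\lambda)$. Your extra details fill in steps the paper leaves implicit: the distinguisher keeps its own $t_k$ to evaluate acceptance, $\Sim_1$ needs only the statement $k$, and a hybrid over rounds handles the case where $\Setup$ is rerun each round. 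Like the paper, you implicitly need $\Sim_0,\Sim_1$ to be classical PPT so that $\mathcal{A}'$ is a valid classical prover against $\Sigma_{\mathsf{BCMVV}}$.
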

\begin{proof}
     Similarly, we proceed by contradiction. Assuming that there exists a probabilistic polynomial-time (PPT) adversary, denoted as~\( \mathcal{A} \), who can break the classical soundness of scheme~$\mathsf{ZK}.\Sigma_{\mathsf{BCMVV}}$ with a non-negligible advantage~\( \varepsilon_0 \). 
     We define the following three games. Let~$\mathsf{adv}_i(\mathcal{A})$ denote the advantage of the adversary~$\mathcal{A}$ in the~$\mathsf{Game}_i$ for~$i=\{0,1,2\}$, respectively.  
        
\vspace{1mm}

\begin{itemize}
    \item $\underline{\mathsf{Game}_0}$: Let~$\mathsf{Game}_0$ be the same as the real scheme~$\mathsf{ZK}.\Sigma_{\mathsf{BCMVV}}$ in Figure~\ref{fig:ZKPoQ-BCMVV}.

    \item $\underline{\mathsf{Game}_1}$: Let~$\mathsf{Game}_1$ be the same as~$\mathsf{Game}_0$, except that the generation of~$(\crs, \td) \gets \textsc{Setup}(1^\lambda)$ is replaced by~$(\crs, \td) \gets \Sim_0(1^\lambda)$, and the proof~$\pi$ generated by the classical verifier~$\mathcal{V}$ is also replaced by the simulated one~$\pi \gets \Sim_1(\crs, \td, k)$ with~$k=(\mathbf{A},\mathbf{A}\mathbf{s}+\mathbf{e})$ correspondingly,
    where~$\Sim = (\Sim_0, \Sim_1)$ is the simulator corresponding to the adaptive multi-theorem computational zero-knowledge property. 

    \item $\underline{\mathsf{Game}_2}$: Let~$\mathsf{Game}_2$ be the same as~$\mathsf{Game}_1$, except that all transcripts other than the extractable-NIZK proof from the verifier to the prover are replaced by the corresponding transcripts from the challenger for the soundness of the original scheme~$\Sigma_{\mathsf{BCMVV}}$. These transcripts will include~$k$ and~$c$. All transcripts from the prover are also forwarded to the challenger correspondingly. These transcripts will include~$\textbf{y}$ and~$\sigma_c$.

\end{itemize}

\noindent$\underline{\mathsf{Game}_0 \approx_c \mathsf{Game}_1}$: This follows directly from the property of the adaptive multi-theorem computational zero-knowledge (Definition~\ref{def:simext-nizk-crs}). Thus,~$\mathsf{adv}_0(\mathcal{A})\leq \mathsf{adv}_1(\mathcal{A})+\mathsf{adv}_{\mathsf{zk}}\leq \mathsf{adv}_1(\mathcal{A})+\negl(\lambda)$.

\vspace{2mm}
\noindent$\underline{\mathsf{Game}_1 \equiv \mathsf{Game}_2}$: The~$\mathsf{Game}_1$ and~$\mathsf{Game}_2$ are identical because the distributions all transcripts except the extractable-NIZK proof from the verifier to the prover in the ZKPoQ scheme and the ones from the (honest) challenger for the original PoQ scheme are the same. Thus, we have~$\mathsf{adv}_2(\mathcal{A})=\mathsf{adv}_1(\mathcal{A})$. 

First, we claim that~$\mathsf{adv}_2(\mathcal{A}) \leq \varepsilon'$, where the latter one denotes the advantage of any classical PPT adversary~$\mathcal{A}'$ against the soundness of the original PoQ scheme~$\Sigma_{\mathsf{BCMVV}}$. This is because once the adversary~$\mathcal{A}$ succeeds in~$\mathsf{Game}_2$, then we can also pass the challenge proposed by the challenger for the soundness of the original PoQ scheme~$\Sigma_{\mathsf{BCMVV}}$. Therefore, we have~$\varepsilon' \geq \mathsf{adv}_2(\mathcal{A})$. Thus,~$\mathsf{adv}_0(\mathcal{A}) \leq \mathsf{adv}_1(\mathcal{A})+\negl({\lambda})=\mathsf{adv}_2(\mathcal{A})+\negl({\lambda}) \leq \varepsilon'+\negl({\lambda})$. Based on the soundness of the original PoQ scheme, we have~$\varepsilon' \le \negl(\lambda)$ and therefore~$\varepsilon_0 \le \negl(\lambda)$. The lemma can be concluded by contradiction.
 
\end{proof}

Next, we proceed to prove that our LWE-based ZKPoQ scheme also enjoys the zero-knowledge property. We follow a similar manner as before to first extract the witness of the extractable-NIZK proof, which can then be used to properly simulate all responses from the quantum prover (as discussed in Section~\ref{sec:contribution}). Therefore, the zero-knowledge property of our factoring-based ZKPoQ scheme is also based on the extractability of the extractable-NIZK proof. 

\begin{lemma} [Computational Zero-Knowledge]
\label{Property:ZK-BCMVV}
    The scheme~$\mathsf{ZK}.\Sigma_{\mathsf{BCMVV}}$ satisfies computational zero-knowledge of ZKPoQ in Definition~\ref{ZKPOQ}.
\end{lemma}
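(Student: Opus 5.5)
We would follow the template of Lemma~\ref{Property:ZK-Factor}: build a classical simulator $\mathsf{Sim}_{\mathcal{V^*}}$ that replaces the honest setup by the NIZK simulator's setup, keeps the trapdoor $\td$, extracts the LWE secret $\mathbf{s}$ from the malicious verifier's proof using $\Ext$, and then answers every challenge of $\mathcal{V^*}$ classically using $\mathbf{s}$. The key observation is that, for a key $k=(\mathbf{A},\mathbf{A}\mathbf{s}+\mathbf{e})$, knowing $\mathbf{s}$ lets a classical machine produce a claw $(\mathbf{x}_0,\mathbf{x}_1=\mathbf{x}_0-\mathbf{s})$ together with a matching image $\textbf{y}$, which is exactly what the honest quantum prover's responses reveal in distribution.

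\textbf{Construction of the simulator.} In each of the $\lambda$ rounds the simulator does the following. (i) Run $(\crs,\td)\gets\Sim_0(1^\lambda)$ and give $\crs$ to $\mathcal{V^*}$, obtaining $(k,\pi)$. If $\Pi.\sV(\crs,k,\pi)=0$, output $\perp$, exactly as the honest prover would. (ii) Otherwise compute $\mathbf{s}\gets\Ext(\crs,\td,k,\pi)$, and abort if $(k,\mathbf{s})\notin\cR$. (iii) Sample $\mathbf{x}_0\xleftarrow{\$}\X$ and set $\mathbf{x}_1=\mathbf{x}_0-\mathbf{s}$. Sample $\textbf{y}$ from the distribution $f'_{k,0}(\mathbf{x}_0)$, i.e.\ $\mathbf{A}\mathbf{x}_0+\mathbf{e}'$ with Gaussian $\mathbf{e}'$, and send $\textbf{y}$. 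This should match the distribution the prover obtains by measuring the image register of $\textsc{Samp}_{\mathcal{F}}$ up to the Hellinger error of Definition~\ref{def: ntcfs}. (iv) On receiving the challenge $c$ from $\mathcal{V^*}$:
\begin{itemize}
\item if $c=0$, output $(b,\mathbf{x}_b)$ for a uniformly random $b$;
\item if $c=1$, sample a uniform $\mathbf{d}$ and output $(\mathbf{d}^\top(\mathcal{J}(\mathbf{x}_0)\oplus\mathcal{J}(\mathbf{x}_1))\bmod 2,\ \mathbf{d})$.
\end{itemize}
These are precisely the output distributions of the honest quantum prover's standard-basis and Hadamard-basis measurements given the claw, as described in \textsc{Prove}$_2$ of Figure~\ref{fig:ZKPoQ-BCMVV}.

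\textbf{Indistinguishability argument.} We would use a hybrid sequence.
\begin{itemize}
\item $H_0$ is the real view.
\item $H_1$ replaces $\Setup$ by $\Sim_0$. This is computationally indistinguishable by the zero-knowledge property, since the CRS distributions must be indistinguishable.
\item $H_2$ runs $\Ext$ and aborts on failure. By extractability, the abort probability is negligible.
\item $H_3$ replaces the quantum prover's messages by the classical ones computed from $(\mathbf{s},\mathbf{x}_0)$. Here we argue statistical closeness round by round.
\end{itemize}
A union bound over the $\lambda$ rounds, or a single CRS reused across rounds, completes the argument.

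\textbf{Main obstacle.} The hardest step is $H_2\to H_3$. There, $k$ is chosen maliciously, so the properties of Definition~\ref{def: ntcfs} hold for every key, but we must argue that the quantum prover's measured distributions depend only on $(\mathbf{A},\mathbf{s},\mathbf{e})$. In particular, the Hadamard-basis outcome has to be uniform $\mathbf{d}$ with the correct parity, which requires the two branches of the superposition to be nearly disjoint in $\mathbf{e}'$ and the error $\mathbf{e}$ to be small relative to the Gaussian width. Extraction alone only guarantees $(k,\mathbf{s})\in\cR$. We would therefore choose the NIZK relation to also enforce a norm bound on $\mathbf{e}=k_2-\mathbf{A}\mathbf{s}$, so that the efficient-range-superposition and injective-pair properties apply to the extracted instance. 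We would then invoke the Hellinger bound to get statistical closeness of both the $\textbf{y}$ marginal and the conditional responses.
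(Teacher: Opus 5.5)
Your proposal is correct and follows essentially the paper's proof: simulate the CRS while keeping $\td$, extract $\mathbf{s}$ from $\mathcal{V^*}$'s proof, and answer both tests classically from the claw $(\mathbf{x}_0,\mathbf{x}_0-\mathbf{s})$, with the CRS switch justified by zero-knowledge and the abort probability bounded by extractability. One difference: the paper samples $b$ first and sets $\textbf{y}\gets f'_{k,b}(\mathbf{x})$, which reproduces the preimage-test statistics exactly, whereas your fixed-branch sampling is only statistically close; your Hellinger argument, your check of $\pi$ before extracting, and your requirement that the relation bound $\lVert\mathbf{e}\rVert$ spell out what the paper's ``same distribution'' claim leaves implicit.
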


\begin{proof} In the following, we provide the construction of the PPT simulator~\( \mathsf{Sim}_{\mathcal{V^*}} \) depending solely on the classical verifier~$\mathcal{V^*}$.
 
\vspace{1mm}
 \noindent  \( \mathsf{Sim}_{\mathcal{V^*}} (1^\lambda) \):
 \begin{enumerate}
     \item Run~$(\crs, \td) \gets \Sim(1^\lambda)$ and store the trapdoor~$\td$.

     \item Run verifier~${\mathcal{V^*}}$ on~$\crs$ to get~$k=(\mathbf{A},\mathbf{A}\mathbf{s}+\mathbf{e})$ and an extractable-NIZK proof~$\pi$ for proving NP relation~$(k=(\mathbf{A},\mathbf{A}\mathbf{s}+\mathbf{e}),\mathbf{s})$.

     \item Run polynomial-time (classical) extractor~$\Ext$ to compute~$w \gets \Ext(\crs, \td, x, \pi)$, where the witness~$w=\mathbf{s}$ and the statement~$x=(\mathbf{A},\mathbf{A}\mathbf{s}+\mathbf{e})$. If this fails, then we abort.
     
     \item Pick a random pair~$(b,\mathbf{x})$ where~$b\xleftarrow{\$}\{0,1\}$ and~$\mathbf{x}\xleftarrow{\$}\X$. Compute~$\textbf{y} = f'_{k,b}(\mathbf{x})$ classically and feed the verifier~${\mathcal{V^*}}$ with the next input~$\textbf{y}$. This produces~$\textbf{y}$ with the same distribution as in the real scheme.

     \item  Receive from the verifier~${\mathcal{V^*}}$ the next output challenge~$c$, if~$c=0$, feed the verifier~${\mathcal{V^*}}$ with next input~$(b, \mathbf{x})$; otherwise sample a random~$\mathbf{d}\in \{0,1\}^n$, compute~$u=\mathbf{d}^\top\cdot (\mathbf{x}\oplus(\mathbf{x}-\mathbf{s}))\bmod 2$, then
     feed the verifier~${\mathcal{V^*}}$ with the next input~$(u,\mathbf{d})$ as the equation test result. Note that both the pair~$(b, \mathbf{x}-b\cdot \mathbf{s})$ and the pair~$(u,\mathbf{d})$ have the same distribution as in the real scheme.

     \item Repeat the above Step 2--Step 5 procedures~$\lambda$ times. 
 \end{enumerate}

First, the distributions of~$\crs$ from the setup algorithm~$\Setup(1^\lambda)$ and the simulation algorithm~$(\crs, \td) \gets \Sim(1^\lambda)$ are computationally indistinguishable. 
In addition, since the probability of extractable-NIZK's extractor~$\Ext$ failing to extract~$w=\mathbf{s}$ is negligible, the probability of simulator~\( \mathsf{Sim}_{\mathcal{V^*}} \) abort is also negligible. 

One can also verify that the simulated transcripts from the prover to the verifier have exactly the same distribution as the ones in the real protocol. First, it is clear that the~$\textbf{y}=f_{k,b}'(\textbf{x})$ with both~$b$ and~$\textbf{x}$ randomly chosen, has the same distribution as the one produced in the real protocol. Then, in the case of~$c=0$, the~$(b,\textbf{x})$ is a random one of two preimages~$(0,\textbf{x}_b)$ and~$(1,\textbf{x}_1)$ of~$\textbf{y}$ under the function~$f_{k,b}'(\textbf{x}_b)$, as the real case. The last case is when~$c=1$, the~$(u,\mathbf{d})$ enjoys the relation~$u=\mathbf{d}^\top\cdot (\mathbf{x}\oplus(\mathbf{x}-\mathbf{s}))\bmod 2$ for a random~$\mathbf{d}$, therefore also has the same distribution as the real one.
 Thus, we have~$
   \mathsf{Sim}_{\mathcal{V^*}}(1^\lambda) \approx_c \mathsf{View}[\langle \mathcal{V^*}(1^\lambda) \leftrightarrow \mathcal{P}(1^\lambda) \rangle].$
  \end{proof}

\bibliography{template}

@InProceedings{JK23,
author="Jawale, Ruta
and Khurana, Dakshita",
title="Unclonable Non-interactive Zero-Knowledge",
booktitle="Advances in Cryptology -- ASIACRYPT 2024",
year="2025",
publisher="Springer Nature Singapore",
address="Singapore",
pages="94--128",
doi={https://doi.org/10.1007/978-981-96-0947-5_4}
}

@article{wu2021strong,
  title={Strong quantum computational advantage using a superconducting quantum processor},
  author={Wu, Yulin and Bao, Wan-Su and Cao, Sirui and Chen, Fusheng and Chen, Ming-Cheng and Chen, Xiawei and Chung, Tung-Hsun and Deng, Hui and Du, Yajie and Fan, Daojin and others},
  journal={Physical Review Letters},
  volume={127},
  number={18},
  pages={180501},
  year={2021},
  publisher={APS},
doi={https://doi.org/10.1103/physrevlett.127.180501}
}

@article{arute2019quantum,
  title={Quantum supremacy using a programmable superconducting processor},
  author={Arute, Frank and Arya, Kunal and Babbush, Ryan and Bacon, Dave and Bardin, Joseph C and Barends, Rami and Biswas, Rupak and Boixo, Sergio and Brandao, Fernando GSL and Buell, David A and others},
  journal={Nature},
  volume={574},
  number={7779},
  pages={505--510},
  year={2019},
 doi = {10.1038/s41586-019-1666-5},
  publisher={Nature Publishing Group}
}

@article{madsen2022quantum,
  title={Quantum computational advantage with a programmable photonic processor},
  author={Madsen, Lars S and Laudenbach, Fabian and Askarani, Mohsen Falamarzi and Rortais, Fabien and Vincent, Trevor and Bulmer, Jacob FF and Miatto, Filippo M and Neuhaus, Leonhard and Helt, Lukas G and Collins, Matthew J and others},
  journal={Nature},
  volume={606},
  number={7912},
  pages={75--81},
  year={2022},
  publisher={Nature Publishing Group UK London},
doi={https://doi.org/10.1038/s41586-022-04725-x}
}

@INPROCEEDINGS{brakerski2021cryptographic,
  author={Brakerski, Zvika and Christiano, Paul and Mahadev, Urmila and Vazirani, Umesh and Vidick, Thomas},
  booktitle={59th Annual Symposium on Foundations of Computer Science}, 
  title={A Cryptographic Test of Quantumness and Certifiable Randomness from a Single Quantum Device}, 
  year={2018},
  volume={},
  number={},
  pages={320-331},
   doi={10.1109/FOCS.2018.00038}}

@inproceedings{brakerski2020simpler,
  title={Simpler Proofs of Quantumness},
  author={Brakerski, Zvika and Koppula, Venkata and Vazirani, Umesh and Vidick, Thomas},
  booktitle={15th Conference on the Theory of Quantum Computation, Communication and Cryptography},
  year={2020},
doi={https://doi.org/10.4230/LIPIcs.TQC.2020.8}
}

@inproceedings{aaronson2011computational,
  title={The computational complexity of linear optics},
  author={Aaronson, Scott and Arkhipov, Alex},
  booktitle={43rd Annual ACM Symposium on Theory of Computing},
  pages={333--342},
 doi = {10.1145/1993636.1993682},
  year={2011}}

@article{bremner2011classical,
  title={Classical simulation of commuting quantum computations implies collapse of the polynomial hierarchy},
  author={Bremner, Michael J and Jozsa, Richard and Shepherd, Dan J},
  journal={Proceedings of the Royal Society A: Mathematical, Physical and Engineering Sciences},
  volume={467},
  number={2126},
  pages={459--472},
  year={2011},
  publisher={The Royal Society Publishing},
doi={https://doi.org/10.1098/rspa.2010.0301}}

@inproceedings{shor1994algorithms,
  title={Algorithms for quantum computation: discrete logarithms and factoring},
  author={Shor, Peter W},
  booktitle={35th Annual Symposium on Foundations of Computer Science},
  pages={124--134},
  year={1994},
  organization={IEEE},
doi={10.1109/SFCS.1994.365700}
}

@article{regev2023efficient,
  title={An efficient quantum factoring algorithm},
  author={Regev, Oded},
  journal={Journal of the ACM},
  year={2024},
doi={https://doi.org/10.1145/3708471}
}

@inproceedings{alagic2020non,
  title={Non-interactive classical verification of quantum computation},
  author={Alagic, Gorjan and Childs, Andrew M and Grilo, Alex B and Hung, Shih-Han},
  booktitle={Theory of Cryptography Conference},
  pages={153--180},
  year={2020},
doi = {https://doi.org/10.1007/978-3-030-64381-2_6},
  organization={Springer}
}

@inproceedings{bellare1992defining,
  title={On defining proofs of knowledge},
  author={Bellare, Mihir and Goldreich, Oded},
  booktitle={Advances in Cryptology - {CRYPTO} 1992},
  pages={390--420},
  year={1992},
  organization={Springer},
doi={https://doi.org/10.1007/3-540-48071-4_28}}

@inproceedings{goldreich1986proofs,
  title={Proofs that yield nothing but their validity and a methodology of cryptographic protocol design},
  author={Goldreich, Oded and Micali, Silvio and Wigderson, Avi},
    year={1986},
  booktitle={27th Annual Symposium on Foundations of Computer Science},
doi={doi: 10.1109/SFCS.1986.47}
}

@article{alnawakhtha2024lattice,
  title={Lattice-based quantum advantage from rotated measurements},
  author={Alnawakhtha, Yusuf and Mantri, Atul and Miller, Carl A and Wang, Daochen},
  journal={Quantum},
  volume={8},
  pages={1399},
  year={2024},
doi = {https://doi.org/10.22331/q-2024-07-04-1399},
  publisher={Verein zur F{\"o}rderung des Open Access Publizierens in den Quantenwissenschaften}
}

@inproceedings{KLVY,
  title={Quantum advantage from any non-local game},
  author={Kalai, Yael and Lombardi, Alex and Vaikuntanathan, Vinod and Yang, Lisa},
  booktitle={55th Annual ACM Symposium on Theory of Computing},
  pages={1617--1628},
doi = {10.1145/3564246.3585164},
  year={2023}
}

@InProceedings{C:MicPei13,
author="Micciancio, Daniele
and Peikert, Chris",
title="Hardness of SIS and LWE with Small Parameters",
booktitle="Advances in Cryptology -- CRYPTO 2013",
year="2013",
publisher="Springer Berlin Heidelberg",
address="Berlin, Heidelberg",
pages="21--39",
isbn="978-3-642-40041-4",
doi={https://doi.org/10.1007/978-3-642-40041-4_2}
}

@InProceedings{brakerski2023simple,
author="Brakerski, Zvika
and Gheorghiu, Alexandru
and Kahanamoku-Meyer, Gregory D.
and Porat, Eitan
and Vidick, Thomas",
title="Simple Tests of Quantumness Also Certify Qubits",
booktitle="Advances in Cryptology -- CRYPTO 2023",
year="2023",
publisher="Springer Nature Switzerland",
address="Cham",
pages="162--191",
isbn="978-3-031-38554-4",
doi={https://doi.org/10.1007/978-3-031-38554-4_6}}

@article{KCVY,
  title={Classically verifiable quantum advantage from a computational Bell test},
  author={Kahanamoku-Meyer, Gregory D and Choi, Soonwon and Vazirani, Umesh V and Yao, Norman Y},
  journal={Nature Physics},
  volume={18},
  number={8},
  pages={918--924},
  year={2022},
  publisher={Nature Publishing Group UK London},
doi={https://doi.org/10.1038/s41567-022-01643-7}
}

@misc{chardouvelis2023quantum,
      author = {Orestis Chardouvelis and Vipul Goyal and Aayush Jain and Jiahui Liu},
      title = {Quantum Key Leasing for {PKE} and {FHE} with a Classical Lessor},
      howpublished = {Cryptology {ePrint} Archive, Paper 2023/1640},
      year = {2023},
      url = {https://eprint.iacr.org/2023/1640}
}


\end{document}